\newcommand*\circled[1]{\tikz[baseline=(char.base)]{%        
        \node[shape=rectangle,fill=gray!20,draw,inner sep=2pt,opacity=0.5,text opacity=1] (char) {#1};}}
\newcommand\citem{%
  \stepcounter{enumi}\item[\circled{\arabic{enumi}}]}
\def\url@leostyle{%
  \@ifundefined{selectfont}{\def\UrlFont{\sf}}{\def\UrlFont{\small\ttfamily}}}
\newcommand{\eat}[1]{}
\definecolor{light-gray}{gray}{0.9}
\newenvironment{packed_enum}{%
  \begin{enumerate}%
  }{\end{enumerate}}
\newtheorem{lemma}{Lemma}
\newcolumntype{L}[1]{>{\raggedright\let\newline\\\arraybackslash\hspace{0pt}}m{#1}}
\begin{document}

\title{SilentDelivery: Practical Timed-delivery of Private Information using Smart Contracts}
% \title{}

\author{Chao~Li,~\IEEEmembership{Member,~IEEE,}
        and~Balaji~Palanisamy,~\IEEEmembership{Member,~IEEE}% <-this % stops a space
\IEEEcompsocitemizethanks{\IEEEcompsocthanksitem C. Li is with the Beijing Key Laboratory of Security and Privacy in Intelligent Transportation, Beijing Jiaotong University, Beijing,
China.\protect\\
% note need leading \protect in front of \\ to get a newline within \thanks as
% \\ is fragile and will error, could use \hfil\break instead.
E-mail: li.chao@bjtu.edu.cn
\IEEEcompsocthanksitem B. Palanisamy is with the School of Computing and Information, University of Pittsburgh, Pittsburgh,
USA.}% <-this % stops an unwanted space
% \thanks{Manuscript received April 19, 2005; revised August 26, 2015.}
}

\IEEEtitleabstractindextext{%
\begin{abstract}

% ================= short ======================
This paper proposes \textit{SilentDelivery}, a secure, scalable and cost-efficient protocol for implementing timed information delivery service in a decentralized blockchain network. \textit{SilentDelivery} employs a novel combination of threshold secret sharing and decentralized smart contracts. 
The protocol maintains shares of the decryption key of the private information of an information sender using a group of mailmen recruited in a blockchain network before the specified future time-frame and restores the information to the information recipient at the required time-frame.
To tackle the key challenges that limit the security and scalability of the protocol,
\textit{SilentDelivery} incorporates two novel countermeasure strategies.
The first strategy, namely \textit{silent recruitment}, enables a mailman to get recruited by a sender \textit{silently} without the knowledge of any third party.
The second strategy, namely \textit{dual-mode execution}, makes the protocol run in a lightweight mode by default, where the cost of running smart contracts is significantly reduced.
We rigorously analyze the security of \textit{SilentDelivery} and implement the protocol over the Ethereum official test network. 
The results demonstrate that \textit{SilentDelivery} is more secure and scalable compared to the state of the art and reduces the cost of running smart contracts by 85\%. 

\end{abstract}

\begin{IEEEkeywords}
Blockchain, Smart Contract, Timed Information Delivery, Information Privacy, Timed Release, Secret Sharing.
\end{IEEEkeywords}}

\maketitle

\IEEEdisplaynontitleabstractindextext

\IEEEpeerreviewmaketitle

\IEEEraisesectionheading{\section{Introduction}\label{sec:introduction}}

% \section{Introduction}

\IEEEPARstart{R}{apid} advancements in Internet and service technology has led to a proliferation of information exchange happening in the cyberspace.
%With the rapid acceptance of going paperless, fueled by the ongoing advances in data transmission technology, 
%we are witnessing an increasing amount of information sharing and exchange happening daily in the cyberspace.
\underline{T}imed \underline{i}nformation \underline{d}elivery \underline{s}ervice (\textit{TIDS}) refers to a class of service that enables an information sender to make a piece of private information arrive at an information recipient during a chosen future time-frame.
% without disclosing the information before {\bf or after} the time-frame. {\bf Do we need to include "after" the timeframe also?}
% \textcolor{red}{we may change 'without...time-frame' to 'rather than before or after the time-frame'}
%As a common way of exchanging information, \underline{t}imed \underline{i}nformation \underline{d}elivery \underline{s}ervice (\textit{TIDS}) refers to a type of service that an information sender could rely on to make a piece of private information arrive at an information recipient within a future time-frame specified by the sender, without disclosing the information either before or after the time-frame.
Many scenarios require timed delivery of information in real-world. 
For example, courier services allow clients to select a predetermined time-frame during which their mail can be delivered.\footnotemark\ 
The ability of \textit{TIDS} to control the arrival time of sensitive information and knowing precisely when information arrives can be crucial for many businesses and enterprises.\footnotemark\ 
Imagine a situation in which Alice would like her business proposal to arrive at the corporate headquarter exactly during the board meeting time. Here, an early arrival of the proposal could potentially leak her idea to her competitors while a late arrival will remove Alice's proposal out of the competition.
%When information is transmitted in electronic form, it seems as if the nearly instantaneous network transmission make the senders easily handle the arrival time of information.
% the senders can easily handle the arrival time of information due to the almost instantaneous network transmission.
% the arrival time of information could be easily handled by the senders themselves due to the almost instantaneous network transmission. 
%However, there might be many factors that stop a sender from sending out her information on time, such as a time-zone difference, a network issue or simply a bad memory.
While there are numerous services (e.g., Boomerang~\cite{boomerang} and Postfity~\cite{postfity}) that provide pre-scheduled timed delivery of information, current implementations of timed information delivery services (\textit{TIDS}) are heavily centralized. These services require the users to entirely trust the centralized servers and their security properties are solely limited to a single point of trust.
% not maliciously colluding with potential adversaries and secretly disclosing their information.
More importantly, even in scenarios when the service providers are trustworthy, the services are still prone to unpredictable security breaches or insider attacks that are beyond the control of the service providers~\cite{chen2018certchain,hu2018searching}.
On the other hand, the emergence of Blockchain technologies such as Ethereum~\cite{buterin2014next} and Smart contracts~\cite{wood2014ethereum} provides significant potential for new security designs that support a decentralized implementation of \textit{TIDS} to overcome the single point of trust issues associated with centralized approaches.

% Relying on witness encryption~\cite{garg2013witness}, one distinct decentralized approach~\cite{liu2018build,liu2015time,liu2015extractable} encloses private information with the blockchain puzzles used in Proof-of-Work~\cite{nakamoto2008bitcoin}, which relieves information recipients from the burden of computing puzzles because the blockchain puzzles are periodically solved by blockchain miners.
% However, the current implementation of witness encryption is far from practical, making this approach suffer from high performance overhead~\cite{liu2015time,ning2018keeping}. 
% Another problem is that using similar blockchain puzzles for protecting many pieces of information could potentially lead to

\footnotetext[1]{\begin{scriptsize}E.g., UPS customers can pick a 2-hour time frame for ensuring a confirmed delivery. https://www.ups.com/us/en/help-center/sri/ups-my-choice-delivery\\-window.page \end{scriptsize}}

\footnotetext[2]{\begin{scriptsize} https://www.rapidparcel.com/timed-delivery/ \end{scriptsize}}

%\footnotetext[3]{\begin{scriptsize} Boomerang (https://www.boomeranggmail.com/) allows users of Gmail to schedule their emails to be sent when users have no connection with the Internet. \end{scriptsize}}

%\footnotetext[4]{\begin{scriptsize} Postfity (https://postfity.com/) helps users schedule messages to be posted onto social networks at a predetermined time frame.  \end{scriptsize}}
  
In this paper, we present \textit{SilentDelivery}, a secure, scalable and cost-efficient protocol for implementing timed information delivery services (\textit{TIDS}) in a decentralized blockchain network.  
\textit{SilentDelivery} employs a novel combination of threshold secret sharing~\cite{shamir1979share} and decentralized smart contracts~\cite{wood2014ethereum}. 
The protocol maintains shares of the decryption key of the private information of an information sender using a group of mailmen recruited in a blockchain network before the specified future time-frame and restores the information to the information recipient at the required time-frame.
Here, the benefits of employing blockchains and smart contracts are threefold:
(a) \textit{SilentDelivery} establishes a permissionless \textit{TIDS} marketplace atop well-established public blockchains (e.g., Ethereum), allowing any interested party to join the service community as a mailman. 
Information senders and mailmen can thus simply leverage the native cryptocurrencies of blockchains for payment and settlement;
(b) \textit{SilentDelivery} inherits the two key security properties \textit{persistence} and \textit{liveness} of public blockchain backbone~\cite{garay2015bitcoin} and hence, both senders and mailmen cannot deny any information (e.g., recruitment relationship) recorded in the blockchain (\textit{persistence}). They are protected from selective denial of service attacks and can always change the blockchain state with valid transactions (\textit{liveness}); 
(c) Inspired by recent work on blockchain-based secure multi-party computations~\cite{kumaresan2014use,kumaresan2016amortizing,kumaresan2016improvements}, \textit{SilentDelivery} protects the fairness of \textit{TIDS} and handles any misbehavior against the protocol by using the cryptocurrencies pre-locked in smart contracts as security \textit{deposits} and their confiscation as \textit{penalty}. Hence, rational participants of the protocol are always incentivized to honestly follow the protocol.
However, a naive design of the protocol, where each message is delivered by a single mailman, may result in both low service availability (i.e., the single mailman accidentally loses data or network connection) as well as lead to a single point of trust (i.e., anyone that manages the decryption key may collude with the single mailman). 
Therefore, \textit{SilentDelivery} delivers each message with a group of $n$ mailmen. Each mailman manages a share of the decryption key as well as a copy of the encrypted message. As long as at least $t\ (t<n)$ mailmen manage shares well, the message would be successfully delivered without getting prematurely disclosed.

% However, a naive design of the protocol, where each message is delivered by a single mailman, encounters a security challenge that is essentially a key management problem.
% Specifically, to protect the private information carried by the message against the single mailman, the sender needs to encrypt the message with a secret key, which should be later revealed to the information recipient exactly during the prescribed time-frame.

% Blockchain and smart contracts
% (1) cryptocurrency, service market
% (2) persistance liveness, foundamental security
% (3) MPC wih penalty

% Secret share
% (1) private information encryption
% (2) 

% \textcolor{red}{
% We identify two key challenges that significantly limit the security and scalability of the protocol:
% (1) for the sake of fair trade, the protocol demands senders and mailmen to conclude recruitment relationships via smart contracts but the transparency of smart contracts makes it difficult to conceal the relationships before the future time-frame, which endangers the service security in several aspects and 
% (2) due to the use of threshold secret sharing, a higher service availability requires a sender to recruit more mailmen, resulting in higher cost for carrying out the interactions between mailmen and smart contracts.
Here, the use of blockchains and smart contracts also leads to two key challenges that impact the security and scalability of the protocol. 
First, for the sake of fair trade, the protocol requires senders and mailmen to conclude recruitment relationships via smart contracts but the transparency of smart contracts makes it difficult to conceal the relationships before the future time-frame which challenges the service security in multiple aspects. 
Second, due to the use of threshold secret sharing, the protocol requires senders to recruit a group of $n$ mailmen to gain higher service availability which involves $O(n)$ transaction fees for carrying out the interactions between the $n$ recruited mailmen and smart contracts. 
% \textcolor{red}{
Hence, given a large $n$, the cost for running the service could be more than what most users can afford, making the approach difficult to work in practice.
% }
% \textit{SilentDelivery} tackles these challenges through two novel countermeasure strategies.
% To tackle the key challenges that limit the security and scalability of the protocol,
\textit{SilentDelivery} incorporates two novel countermeasure strategies to tackle these two challenges, respectively.
The first strategy, namely \textit{silent recruitment}, recruit each mailman for a sender \textit{silently} without the knowledge of any third party while still making it possible for the recruitment relationship to be revealed to the smart contracts during a future time-frame.
% \textcolor{red}{
Moreover, a mailman recruited via \textit{silent recruitment} cannot prove the hidden recruitment relationship to any party, including the information recipient and other mailmen. Hence, \textit{silent recruitment} could significantly reduce the attack surface by making any potential adversary unable to differentiate a small group of mailmen that deliver a specific message from other mailmen inside the large mailmen community who have registered in the smart contracts.
% }
% helps expand the attack surface encountered by any potential adversary from a selected public group of mailmen to the entire mailmen community registered in the smart contracts.
The second strategy, namely \textit{dual-mode execution}, makes the protocol run in a lightweight mode by default, where the non-scalable regulations are cut off to reduce the cost of running smart contracts significantly.
When a dispute occurs, any recruited mailman reserves the ability to switch the protocol to a heavyweight mode by rebinding the removed regulations with smart contracts to redress and penalize any fraudulent or dishonest behavior, just as if these regulations were never decoupled.

% \textcolor{red}{
In a nutshell, this paper makes the following key contributions:
\begin{itemize}[leftmargin=*]
\item To the best of our knowledge, \textit{SilentDelivery} is the \textit{first} practical decentralized approach designed for \textit{TIDS} that is secure, scalable and cost-efficient. 
\item \textit{SilentDelivery} is a \textit{pure} decentralized approach designed for a trustless environment, without requiring any trusted party.
\item \textit{SilentDelivery} completely \textit{isolates} the service execution from the state of sender after the service has been set up, without requiring any assistance from the sender side.
\item \textit{SilentDelivery} employs the \textit{silent recruitment} strategy to reduce the attack surface of \textit{TIDS} and employs the \textit{dual-mode execution} strategy to reduce the service cost.
% incentivizes mailmen to stay honest so that the non-scalable operations in the protocol could be executed without interacting with the blockchain, which reduces the cost of recruiting a group of $n$ mailmen from $O(n)$ to $O(1)$.
\item We rigorously analyze the security of \textit{SilentDelivery} and implement the protocol over the Ethereum official test network. 
The results demonstrate that \textit{SilentDelivery} is more secure and scalable compared to the state of the art and reduces the cost of running smart contracts by 85\%.
\end{itemize}
% }

% \textcolor{red}{
The rest of this paper is organized as follows: 
We start by introducing preliminaries in Section~2.
In Section~3, we present a strawman protocol for \textit{TIDS} that involves public recruitment relationships as well as a high service cost. We then analyze the two challenges and present the attack methods and design goals.
In Section~4, we propose the \textit{SilentDelivery} protocol and presents the two countermeasure strategies, namaly \textit{silent recruitment} and \textit{dual-mode execution}.
In Section~5, we rigorously analyze the security and cost of \textit{SilentDelivery}.
We implement and evaluate the protocols over the Ethereum official test network in Section~6.
Finally, we discuss related work in Section~7 and conclude in Section~8.
% }

\eat{
\textcolor{red}{
The most recent decentralized approach~\cite{ning2018keeping,li2018decentralized,Kimono} leverages smart contracts~\cite{wood2014ethereum} to establish a virtual autonomous agent, through which an information sender could recruit a group of peers from the blockchain network as her mailmen, whose responsibilities include securely maintaining the private information (or its decryption key) via threshold secret sharing~\cite{shamir1979share} before a future time-frame and later revealing the maintained information during the time-frame. 
rough threshold secret sharing~\cite{shamir1979share} 
}

\textcolor{red}{
In this paper, we identify two key challenges that significantly limit the security and scalability of the state-of-the-art decentralized approach~\cite{ning2018keeping,li2018decentralized,Kimono}:
(1) for the sake of fair trade, this approach demands senders and mailmen to conclude recruitment relationships via smart contracts but the transparency of smart contracts makes it difficult to conceal the relationships before the future time-frame, which endangers the service security in several aspects and 
(2) due to the use of threshold secret sharing, a higher service availability requires a sender to recruit more mailmen, resulting in higher cost for carrying out the interactions between mailmen and smart contracts.
We then propose the \textit{SilentDelivery} protocol, which tackles these challenges through two novel countermeasure strategies.
The first strategy, namely \textit{silent recruitment}, makes a mailman get recruited by a sender \textit{silently} without the knowledge of any third party while still making it possible for the recruitment relationship to be revealed to the smart contracts during a future time-frame.
The second strategy, namely \textit{dual-mode execution}, makes the protocol run in a lightweight mode by default, where the non-scalable regulations have been cut off.
When a dispute occurs, any recruited mailman reserves the ability to switch the protocol to a heavyweight mode by rebinding the removed regulations with smart contracts to redress and penalize any fraudulent or dishonest behavior, just as if these regulations were never decoupled.
We rigorously analyze the security of \textit{SilentDelivery} and implement the protocol over the Ethereum official test network. 
The results demonstrate that \textit{SilentDelivery} is more secure and scalable compared to the state of the art and the cost of running smart contracts is reduced by over 85\%. 
To the best of our knowledge, \textit{SilentDelivery} is the \textit{first} decentralized approach designed for \textit{TIDS} that is secure, scalable and cost-efficient. 
}

\textcolor{purple}{\textbf{[
optional.
]}}

}
% The rest of this paper is organized as follows: 
% We discuss related work in Section~\uppercase\expandafter{\romannumeral2} and introduce preliminaries in Section~\uppercase\expandafter{\romannumeral3}.
% In Section~\uppercase\expandafter{\romannumeral4}, we model timed information delivery service (\textit{TIDS}) and propose a strawman protocol for \textit{TIDS} and demonstrate its limitations on security and scalability.
% Then, in Section~\uppercase\expandafter{\romannumeral5}, we propose the \textit{SilentDelivery} protocol that tackles the limitations through its two countermeasure strategies, namely
% \textit{silent recruitment} and \textit{dual-mode execution}.
% We present the security analysis of \textit{SilentDelivery} in Section~\uppercase\expandafter{\romannumeral6} and the implementation of the protocol over the Ethereum official test network in Section~\uppercase\expandafter{\romannumeral7}.
% Finally, we conclude in Section~\uppercase\expandafter{\romannumeral8}.

\section{Preliminaries}

In this section, we discuss the preliminaries about smart contracts and introduce the key cryptographic tools used in our work.
While we discuss smart contracts in the context of Ethereum~\cite{wood2014ethereum}, we note that our solutions are also applicable to a wide range of other smart contract platforms. 
We summarize the key notations that will be used in this section and in the rest of this paper in Table~\ref{t1}.

% \subsection{Account types}

\subsection{Smart contracts}
There are two types of accounts in Ethereum, namely External Owned Accounts (EOAs) and Contract Accounts (CAs).
To interact with the Ethereum blockchain, an user needs to own an EOA by locally creating a pair of keys.
Specifically, the public key $pk_{EOA}$ can generate a 20-byte address $addr(EOA)$ to uniquely identify the EOA and the private key $sk_{EOA}$ can be used by the user to sign transactions or other types of data.
Then, any user can create a smart contract by sending out a contract creation transaction from a controlled EOA.
The 20-byte address $addr(CA)$ of the created smart contract is generated in a deterministic and predictable way and becomes the unique identity of the contract account (CA).

A smart contract (or contract, $C$) in Ethereum is a piece of program created using a high-level contract-oriented programming language such as \textit{Solidity}\footnote{\begin{scriptsize} https://github.com/ethereum/solidity \end{scriptsize}}.
After compiling into a low-level bytecode language called Ethereum Virtual Machine (EVM) code, the created contract is packaged into a transaction, which is then broadcasted to the entire Ethereum network formed by tens of thousands of miner nodes.
Following the Proof-of-Work (PoW) consensus protocol~\cite{nakamoto2008bitcoin}, all the miners in Ethereum competitively solve a blockchain puzzle and the winner packages the received transactions into a block and appends the new block to the end of Ethereum blockchain.
From then on, it is hard to tamper with the contract as each miner maintains a copy of the new block and an adversary has to falsify majority of these copies in order to change the network consensus about the contract. 
% A smart contract, after being deployed into Ethereum blockchain via the above process, becomes ready to be interacted.
In Ethereum, any user can create a transaction to call any accessible function within a deployed contract.
The function called by this transaction is then executed and verified by the miners and its inputs and outputs are both recorded in Ethereum blockchain.
In other words, smart contracts in Ethereum are executed transparently in a decentralized manner and the results are deterministic.

\begin{table}
\begin{center}
\begin{tabular}{|c|p{6.5cm}|}
\hline
\textbf{notation} & \textbf{description} \\
\hline
$S$ & an information sender \\
$M$ & a mailman \\
$R$ & an information recipient \\
$C$ & a smart contract \\
$C.fun()$ & function $fun()$ within contract $C$ \\
% $\rightrightarrows$ & broadcast messages via public off-chain channels\\
$\dashrightarrow$ & transmit messages via private off-chain channels\\
$\Rightarrow$ & function invocation that results in transaction fees\\
$addr(*)$ & an 20-byte address of an EOA or a CA\\
$hash(*)$ & an keccak hash value\\
$List(*)$ & a list of objects\\
\hline
\end{tabular}
\end{center}
\caption{Summary of notations.}      
% \vspace{-5mm}
\label{t1}
\end{table}

\subsection{Transaction fees}
\label{s3.3}
In order to either deploy a new contract or call a deployed contract in Ethereum, one needs to spend Gas, or transaction fees.
Based on the computational work of the transactions or smart contracts executed by miners, a part of Ether\footnote{\begin{scriptsize} The native cryptocurrency in Ethereum, denoted by $\mathsf{\Xi}$. \end{scriptsize}} needs to be spent in order to purchase an amount of Gas, which is then paid to the miner that creates the new block. 
The Gas system is important for Ethereum as it helps incentivize miners to stay honest, nullify denial-of-service attacks and encourage efficiency in smart contract programming.
On the other hand, the Gas system requires protocols, especially the multi-party ones, to be designed with higher scalability in Ethereum. This is due to the fact that even a single-round multi-party protocol could spend a lot of money to run in case of a large number of participants. 
% makes a request for more scalable and efficient protocol design that uses smart contracts, especially for multi-party protocols, because even single-round multi-party protocols could spend a lot of money to run in case of a large number of participants.  

% \textcolor{red}{
% \subsection{Transaction fees}
% \label{s3.3}
% In order to either deploy a new contract or call a deployed contract in Ethereum, one needs to spend Gas, or transaction fees.
% Based on the complexity of the contract or that of the called function, an amount of ether needs to be spent to purchase an amount of Gas as a transaction fee, which is then paid to the winning miner. 
% The Gas system is important for Ethereum as it helps to incentivize miners to stay honest, to nullify denial-of-service attacks and to encourage efficiency in smart contract programming.
% On the other hand, the Gas system requires protocols, especially the multi-party ones, to be designed with higher scalability in Ethereum. This is due to the fact that even a single-round multi-party protocol could spend a lot of money to run in case of a large number of participants. 
% }

% \textcolor{red}{
\subsection{Off-chain channels}
\label{s3.4}
In Ethereum, nodes forming the underlying P2P network can send messages to each other via off-chain channels established through the Whisper protocol~\cite{Whisper2017}.
By default, messages are publicly broadcast to the entire P2P network.
The Whisper protocol enables two important functionalities:
(1) message filtering functionality that enables an EOA to set up a filter to only accept interested messages marked with a specific 4-byte topic (e.g., the topic could be \textit{TIDS});
(2) private channel functionality that enables two EOAs to establish private off-chain channels to exchange messages with either symmetric or asymmetric keys.
Concretely, any EOA can locally generate a pair of asymmetric Whisper keys and reveal the public Whisper key by storing it into the transparent \textit{TIDS} smart contract, which allows other EOAs to initialize private off-chain channels.
In the rest of this paper, we assume that the private off-chain channels are secure and we omit the detailed settings of off-chain channels. 
Also, to make it easier to distinguish on/off-chain interactions, we denote private off-chain communication between two EOAs and public on-chain function invocation transactions with symbols $\dashrightarrow$ and $\Rightarrow$, respectively.
It is worth noting that off-chain communication costs no money while on-chain transactions charge transaction fees.
% }

\subsection{Cryptographic tools}

The design of \textit{SilentDelivery} employs several key cryptographic tools:
(1) \textit{(t,n)-threshold secret sharing}~\cite{shamir1979share} is used to split the decryption key of the private information into $n$ shares, among which any $M$ shares could recover the key but $t-1$ or fewer shares fail to do that. Specifically, we denote the key split and key restoration as $shares \gets SS.split(key,[t,n])$ and $key \gets SS.restore(shares,[t,n])$, respectively.
(2) we use the Keccak 256-bit hash function supported by Ethereum and it is denoted as $hash(*)$.
(3) we use the ECDSA signature supported by Ethereum. 
Specifically, a EOA (i.e., \textit{signer}) can sign any message with its private account key $sk_{EOA}$ via \textit{JavaScript} API and get signature $vrs \gets sig(hash(message))$.
Later, other EOAs or CAs can recover the address of the \textit{signer} EOA (i.e. $addr(signer)$) via \textit{JavaScript} API or \textit{Solidity} native function and get
$addr(signer) \gets vf(hash(message),vrs)$. 
% \textcolor{red}{
The signature $vrs$ can be either privately delivered via off-chain channels or publicly announced via the blockchain.  
% }

\section{A Strawman Protocol}
% A straw-man proposal is a brainstormed simple draft proposal intended to generate discussion of its disadvantages and to provoke the generation of new and better proposals.
We first describe the timed information delivery service (\textit{TIDS}) as a three-phase process. We then propose a strawman protocol that implements \textit{TIDS} using threshold secret sharing and smart contracts. We analyze the key limitations of the strawman protocol in terms of its security and scalability and finally, we present the attack methods.
 % and enumerate the design goals of the \textit{SilentDelivery} protocol, which is presented in Section~\ref{s5}.

\subsection{TIDS as a three-phase process}
We describe the \textit{TIDS} problem as a three-phase process:

\begin{itemize}[leftmargin=*]

\item \textit{TIDS.send}: 
The information sender (or sender, $S$) sends her private information  with a time-frame and the identity of information recipient (or recipient, $R$) to the \textit{TIDS} service provider (i.e., mailmen in our protocol).
\item \textit{TIDS.pend}:
The \textit{TIDS} provider preserves the private information before the specified time-frame.
\item \textit{TIDS.deliver}:
The \textit{TIDS} provider delivers the private information to the recipient during the specified time-frame.

% \item \textit{TIDS.send}: 
% The information sender (or sender) encrypts her private information with a key through symmetric encryption, pass the encrypted information to the information recipient (or recipient), recruits a group of mailmen and assign each mailman a share of the key.
% \item \textit{TIDS.pend}:
% Each mailman maintains the assigned share before the future time-frame specified by sender.

% \item \textit{TIDS.deliver}:
% Each mailman reveal the maintained share to recipient, who then restore the key and get access to the private information.

\end{itemize}

\subsection{The strawman protocol}

We first propose a strawman protocol that implements \textit{TIDS} using threshold secret sharing and smart contracts.
We sketch the strawman protocol in Fig.~\ref{protocol_sketch_01} and present the formal description in Fig.~\ref{protocol_detail_1}.
Specifically, the regulations of the strawman protocol are programmed as an agent contract $C_{agent}$, through which a sender can recruit a group of Ethereum accounts to jointly take the role of the \textit{TIDS} provider.
We name each account serving for \textit{TIDS} a mailman, $M$.
The protocol demands each mailman $M$ to register itself to the agent contract $C_{agent}$ via $newMailman(whiskey,\mathsf{\Xi} deposit)$, where 
% $List(pubkey)$ denotes a list of public keys serving different future time-frames, 
$whiskey$ is a public Whisper key used to establish a private communication channel with the mailman using the Ethereum Whisper protocol and 
$\mathsf{\Xi} deposit$ represents the amount of Ether($\mathsf{\Xi}$) that will be locked in $C_{agent}$ as a security deposit.
We can understand that there is a recruitment agreement at $C_{agent}$, which goes into effect only when it has been signed by both a sender and a mailman, so that the registration of a mailman in this context means that the mailman has signed on a recruitment agreement and has promised to serve for future senders without violating the protocol. Otherwise the $\mathsf{\Xi} deposit$ will get confiscated.
We next describe the three phases of the strawman protocol in detail.

\begin{figure}
\centering
{
   
    \includegraphics[width=9cm,height=5.5cm]{./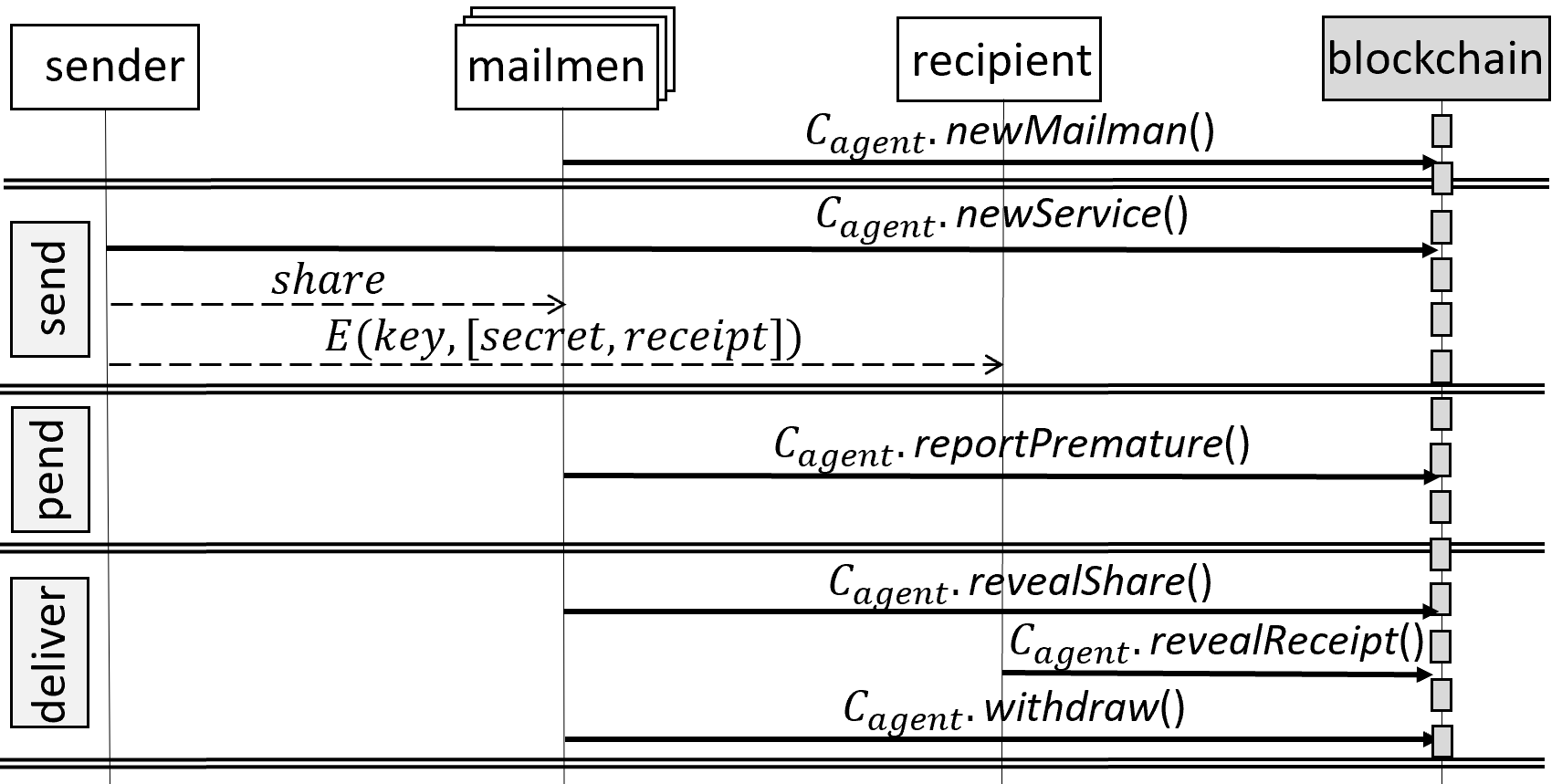}
}
%\vspace{-6mm}
\caption {Strawman protocol sketch. Solid lines denote calling on-chain smart contract functions. Dashed lines denote private off-chain communication.}
%\vspace{-5mm}
\label{protocol_sketch_01} 
\end{figure}
 
\noindent \textbf{\textit{TIDS.send}}: 
Sender $S$ first creates a $key$ as well as a $receipt$, splits $key$ into $shares$ via secret sharing and encrypts both her private information (or $info$) and $receipt$ with $key$.
Sender $S$ then sets up a new service with $C_{agent}$ via $newService()$ and specifies the service details including a future time-frame $[day,slot]$, parameters $[t,n]$ for secret sharing and addresses (i.e., $addr(*)$) of both recipient $R$ and a list (i.e., $List(*)$) of randomly selected registered mailmen.
In addition, sender $S$ also commits the values of $shares$ and $receipt$ by submitting their hash values (i.e., $hash(*)$) to $C_{agent}$ and locks an amount of Ether as $\mathsf{\Xi} remuneration$ to pay the mailmen.
Finally, through private off-chain channels, sender $S$ assigns each recruited mailman a $share$ and transmits the encrypted $[info,receipt]$ to recipient $R$. 
Here, the submission of addresses of selected mailmen implies that sender $S$ has signed the recruitment agreements with these mailmen and has promised to pay $\mathsf{\Xi} remuneration$ for a successful service.
From then on, 
% since both sides have signed these agreements, 
the agreements signed by both sides come into force. 

\begin{figure}
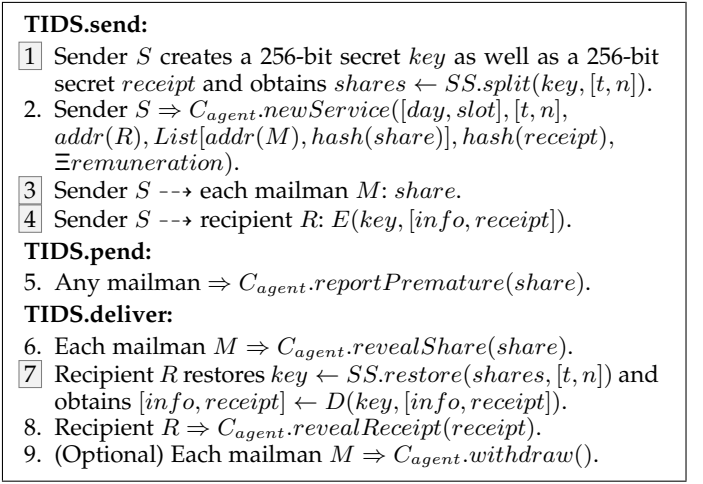

\begin{minipage}{0.5\textwidth}
\begin{small}
\begin{mdframed}[innerleftmargin=8pt]

% \centerline{\textbf{Silent recruitment}}
% \noindent \textbf{Before TIDS.send:} 
% \begin{packed_enum}[leftmargin=*]
%   \item Each mailman $M \Rightarrow C_{agent}.newMailman(key_{whisper},\\\$ deposit)$.
% \end{packed_enum}
\noindent \textbf{TIDS.send:} 
\begin{packed_enum}[leftmargin=*]
  % \item Each mailman $M \Rightarrow C_{agent}.newMailman(key_{w})$.
  \citem Sender $S$ creates a 256-bit secret $key$ as well as a 256-bit secret $receipt$ and obtains $shares \gets SS.split(key,[t,n])$.
  \item Sender $S \Rightarrow C_{agent}$.$newService([day,slot],[t,n],\\
  addr(R),List[addr(M),hash(share)],hash(receipt), \\
  \mathsf{\Xi} remuneration)$.
  \citem Sender $S \dashrightarrow$ each mailman $M$: $share$.
  \citem Sender $S \dashrightarrow$ recipient $R$: $E(key,[info, receipt])$. 
  % \begin{packed_enum}[leftmargin=*]
  %   \item $S \dashrightarrow$ each mailman $M$: $share$.
  %   \item $S \dashrightarrow$ recipient $R$: $E(key,[secret, receipt])$.
  % \end{packed_enum}
\end{packed_enum}
\noindent \textbf{TIDS.pend:} 
\begin{packed_enum}[leftmargin=*]
  \setcounter{enumi}{4}
  \item Any mailman
  $\Rightarrow C_{agent}.reportPremature(share)$.
\end{packed_enum}
\noindent \textbf{TIDS.deliver:} 
\begin{packed_enum}[leftmargin=*]
  \setcounter{enumi}{5}
  \item Each mailman $M \Rightarrow C_{agent}.revealShare(share)$.
  \citem Recipient $R$ restores $key \gets SS.restore(shares,[t,n])$ and obtains $[info,receipt] \gets D(key,[info, receipt])$.
  \item Recipient $R \Rightarrow C_{agent}.revealReceipt(receipt)$.
  \item (Optional) Each mailman $M \Rightarrow C_{agent}.withdraw()$.
\end{packed_enum}
% \noindent \textbf{After TIDS.deliver:} 
% \begin{packed_enum}[leftmargin=*]
%   \item Each mailman $M \Rightarrow C_{agent}.withdraw()$.
% \end{packed_enum}

\end{mdframed}
\end{small}
\end{minipage}

\captionof{figure}{
% \begin{footnotesize} 
The strawman protocol. 
The symbol $\Rightarrow$ denotes calling a function of a smart contract. The symbol $\dashrightarrow$ denotes transmitting data through a private channel. 
A step with a gray bullet (e.g., $\colorbox{gray!18}{1}$) refers to an off-chain action not recorded by blockchain while a step with a white bullet (e.g., 2) refers to an on-chain action recorded by blockchain.
% \end{footnotesize}
}
%\vspace{-4mm}
\label{protocol_detail_1}
\end{figure}

\noindent \textbf{\textit{TIDS.pend}}: 
If any recruited mailman discloses its maintained $share$ before the time-frame, any mailman obtaining this $share$ will be able to report the premature disclosure of this $share$ to $C_{agent}$ via $reportPremature()$.~\footnote{
\begin{scriptsize} 
In case that a party getting the disclosed $share$ owns no mailman account, the party can report the disclosure by informing the $share$ to a mailman and share the reward with the mailman. The fairness can be guaranteed by deploying a smart contract that takes both disclosed $share$ from the party and an amount of deposit from the mailman as inputs. The smart contract can verify the input $share$ with $hash(share)$ retrieved from $C_{agent}$ and then both reward the party with the locked deposit and send out a transaction to call $reportPremature()$ that rewards (larger than deposit) the mailmen.
\end{scriptsize}}
With $hash(share)$ stored in $C_{agent}$, this function will be able to verify this report and divide $\mathsf{\Xi} deposit$ of the accused mailman to the informer as well as sender $S$ in case of a $true$ report. 
% \textcolor{red}{
Here, the reward is split for the purpose of preventing malicious mailman from intentionally reporting itself with a different account and get back its locked deposit before disclosing the maintained $share$ to other parties. 
As a result, a malicious mailman can only obtain a negative payoff of performing premature disclosure unless it colludes with sender $S$ or receives additional payment (i.e., bribery). We believe that the collusion share does not happen as it brings no benefits for sender $S$. We will discuss the latter situation as the bribery attack in section~3.4.
% }
% \footnote{
% \begin{scriptsize} 

% \end{scriptsize}}

\noindent \textbf{\textit{TIDS.deliver}}: 
During the time-frame, each mailman reveals its maintained $share$ to $C_{agent}$ via $revealShare()$.
Recipient $R$, after restoring $key$ and decrypts $[info,receipt]$, receives $info$ and notifies the arrival of $info$ by revealing $receipt$ to $C_{agent}$ via $revealReceipt()$.
% \textcolor{red}{ 
Finally, each honest recruited mailman can either keep serving for future senders without requesting withdrawals or stop working and
withdraw $\mathsf{\Xi} deposit$ and accumulated $\mathsf{\Xi} remuneration$ from $C_{agent}$.
% }

\subsection{Limitations and challenges}
\label{s4.C}

% We identify two key limitations of the strawmaforn protocol that are 
We identify two challenges that significantly limit the security and scalability of the strawman protocol.
% are hard to solve:
% We observe that both the two problem were unsolved in previous work~\cite{ning2018keeping,li2018decentralized,Kimono}, thus provoking the generation of our new protocol $SilentDelivery$.

\subsubsection{Premature revelation of recruitment relationships}

% \noindent (1) \textit{\textbf{Premature revelation of recruitment relationships}}:
% We consider the first limitation to be the \textit{premature revelation of recruitment relations}.
In the strawman protocol, a recruitment agreement (or agreement) is first signed by a registered mailman and then signed by a sender. After that, the agreement comes into force at step 2 while at the same moment the included recruitment relationship (or relationship) is made known to all via the uploaded addresses of recruited mailmen, namely $List(addr(M))$.
Later, the protocol ends at step 9, allowing each honest recruited mailman to make withdrawals.
% Later at protocol step 9, a recruited mailman, who did not disclose its maintained $share$ before \textit{TIDS.deliver}, can withdraw both $\mathsf{\Xi} deposit$ and $\mathsf{\Xi} remuneration$.
After revisiting this procedure, we observe that the relationships were getting revealed much earlier than necessary.
Specifically, the relationships were made public at step 2 while only at step 9, namely the settlement stage, it becomes necessary for $C_{agent}$ to know $List(addr(M))$ to approve withdrawals.
% Therefore, the recruitment relationships were getting revealed much earlier than necessary.
Such a premature revelation of relationships endangers the security of \textit{TIDS} in two aspects.

First, it has been widely recognized that the anonymity offered by blockchain networks is not strong~\cite{biryukov2014deanonymisation} and therefore, premature revelation helps an adversary locate recruited mailmen before \textit{TIDS.deliver}. 
% \textcolor{red}{ 
Specifically, for the sake of anonymity, one may create a new Ethereum account to be a mailman.
However, this new account must have been transferred an amount of Ether by an existing account in order to call $newMailman()$ and pay $\mathsf{\Xi} deposit$.
In case that the information of the owner of the existing account has been disclosed at public places such as a forum, the anonymity of the mailman account has been breached via the connection of the two accounts.
An adversary could then leverage the disclosed information about the existing account to attack the new mailman account.
% even in the case that a registered mailman is a newly created Ethereum account, it must have been transfered an amount of Ether by another account for the purpose of calling $newMailman()$ and paying $\mathsf{\Xi} deposit$.
% }
This significantly weakens the underlying protections of the mailmen offered by the large-scale anonymous Ethereum P2P network. 
% \textcolor{red}{ 
In an ideal scenario when no information about the relationships is disclosed, from the view of an adversary, all the registered mailmen have equal probability to be recruited by a sender.
% }
% {\bf a little bit more explanation on why the security is affected will be helpful}

Second, in case of a recruited mailman seeking collusion with any other party, the premature revelation helps the mailman prove its relationship to that party, which further promotes the success of the trading between these mutually distrusted two parties.
What makes the problem more challenging is that the relationships may get revealed through side information, other than just $List(addr(M))$.
For instance, even if we conceal $List(addr(M))$ in step 2, due to the public $List(hash(share))$, a mailman can still prove its relationship by revealing its maintained $share$.
To make matters worse, 
even if the strawman protocol has forbidden any premature disclosure of $shares$ through $reportPremature()$,
it is still possible for a sophisticated mailman to bypass this restriction by offering a zero-knowledge proof $\pi$~\cite{ben2014succinct} to demonstrate that the mailman knows the pre-image of $hash(share)$ (i.e., $share$), without revealing $share$, thus being able to prove its relationship without being panelized.
Therefore, to overcome this difficulty, we need a solution that can conceal both $List(addr(M))$ and all possible side information (e.g., $List(hash(share))$) before \textit{TIDS.deliver} while still making $List(addr(M))$ get back in $C_{agent}$ during \textit{TIDS.deliver} to help $C_{agent}$ process withdrawal requests.

\subsubsection{A tradeoff between scalability and availability}

% \noindent (2) \textit{\textbf{A tradeoff between scalability and availability of \textit{TIDS}}}:
We observe that the non-scalable design of the strawman protocol leads to $O(n)$ gas cost for a sender to recruit $n$ mailman, which makes the protocol hard to scale in practice.
% that makes completion of a service spend gas in scale $O(n)$ when $n$ mailmand are recruited.
There are two places in the strawman protocol that make the gas cost go with $O(n)$.
First, at step 2, after sender $S$ uploads $List(addr(M))$, an amount of gas needs to be spent in $C_{agent}$ to change the state of each recruited mailman and bind this mailman with this service.
Second, at step 6, each recruited mailman needs to spend an amount of gas to reveal its maintained $share$ to $C_{agent}$ so that the $share$ can be verified through the $hash(share)$ in $C_{agent}$ and made known to recipient $R$.
Obviously, the simplest way of reducing gas cost would be recruiting fewer mailmen.
However, given a fixed $t$ regarding $(t,n)$-threshold secret sharing as well as a fixed probability that a single $share$ gets lost, a larger $n$ results in a higher probability of recovering $key$ at protocol step 7, namely higher availability of \textit{TIDS}~\cite{rodrigues2005high}.
% Unfortunately, security offered by the $(t,n)$-threshold secret sharing is highly correlated with the value of $n$. 
% [\textbf{lemma}]
This shows the tradeoff between scalability and availability of the \textit{TIDS}.
To address this, we propose a redesign of the protocol so that the non-scalable regulations within $C_{agent}$ can be removed while the removed regulations can still constrain the behaviors of protocol participants just as if these regulations are still bounded with $C_{agent}$ in blockchain.
% Inspired by the recently advanced techniques for improving blockchain scalability, including payment channel network (PCN)~\cite{dziembowski2017perun} and state channel network (SCN)~\cite{dziembowski2018general}, we proposes a solution that 

% \textcolor{red}{
\subsection{Attack methods}
Two attack methods can be used to disclose the private information before the prescribed time-frame.
Specifically, we assume that there exists an adversary 
seeking the premature disclosure of private information before the future time-frame.
For instance, Alice's competitors may want to learn the content of her business proposal before the board meeting when Alice is using the service to release the proposal exactly during the board meeting time.
It is easy to see that such premature disclosure could happen when an adversary acquires enough shares of the decryption key from the recruited mailmen before the time-frame. The adversary can employ two methods to achieve this goal.
The first attack method is based on subverting the security of the protocol by creating a large number of EOAs leading to a Sybil attack~\cite{douceur2002Sybil} to gain a disproportionately large influence in getting recruited as a mailmen and recovering the decryption key from the shares in the mailmen. The second attack method is based on creating a monetary reward to bribe the recruited mailmen to disclose the key shares prior to the release time. We refer to it as the Bribery attack.

%a Sybil attack~\cite{douceur2002Sybil}. Concretely, an adversary can create a large number of EOAs and use them to occupy as many as positions of recruited mailmen of a targeted sender as possible.
%As a result, immediately after the shares of the decryption key are assigned to the recruited mailmen, it is possible for the adversary to restore the decryption key and obtain the private information.
% }

% \textcolor{red}{
%The second attack method is called Bribery attack.
%Here, instead of occupying the positions of recruited mailmen in person, an adversary may choose to prepare a fund and use the fund as a reward to bribe mailmen recruited by a targeted sender that are not controlled by the adversary.
%An adversary and a disloyal executor recruited by the targeted sender can even create a bribery smart contract to establish fair collusion between the mutually distrusted two parties~\cite{dong2017betrayal}. 
% }

\eat{
\subsection{Design goals}
\label{s3.D}
Considering the key challenges and demands of \textit{TIDS}, 
we propose the following design goals for $SilentDelivery$.

\noindent \textbf{Silent recruitment}: 
% For the first challenge 
% \textcolor{red}{ 
To conceal recruitment relationships until \textit{TIDS.deliver},
% }
% {\bf please refer to the challenge in technical terms instead of first challenge}, 
we plan to make mailmen get recruited \textit{silently}, without leaving any trace in blockchain that reveals recruitment relationships.

\noindent \textbf{High scalability}: 
% For the second challenge 
% \textcolor{red}{ 
To eliminate the tradeoff between scalability and availability of \textit{TIDS},
% }
% {\bf please refer to the challenge in technical terms instead of second challenge}, 
we aim at reducing the gas cost from $O(n)$ to $O(1)$. 
% so that the tradeoff between scalability and availability of \textit{TIDS} can be eliminated altogether.

\noindent \textbf{Fair termination}: 
The protocol should guarantee to terminate within a certain number of epochs during \textit{TIDS.deliver}.
The termination should be fair and ensure that 
(1) $\mathsf{\Xi} remunaration$ should be and only be charged if the service is successful and 
(2) $\mathsf{\Xi} deposit$ should be and only be confiscated if a mailman truly violates the protocol.

\noindent \textbf{Strong attack resilience}:
It is important to ensure that the proposed solution has strong resilience against attacks seeking the premature disclosure of private information either through Sybil attack or Bribery attack.
% Specifically, our goal is to make it infeasible for any adversary to obtain the private information before the time-frame either either through Sybil attack or Bribery attack.

\noindent \textbf{High availability}:
Finally, we would like our solution to offer high availability (i.e., three nine or four nine) when recruited mailmen maintain only relatively high availability (e.g., 95\%).

% so that the service is not too harsh to confiscate $\mathsf{\Xi} deposit$ of recruited mailmen that may unintentionally get corrupted.

% (1) resilience against collusion; infeasible
% (2) resilience against premature disclosure

% (1) [guaranteed termination] the protocol is guaranteed to terminate within a certain number of epochs;
% (2) [fair trading] when it terminates, fairness should be achieved for all parties. For sender, service fee should be paid only if the service is successful. For each mailman, service fee should be received as long as the service is successful.
% (3) [Non-negative payoff]:

% \noindent \textbf{Liveness[hard to clarify]}: \textit{SilentDelivery} guarantees liveness as long as a threshold number of mailmen remained honest. 

% \noindent \textbf{Termination}:

% \noindent \textbf{Fairness}:

% \noindent \textbf{Non-negative payoff}:

% \noindent \textbf{Resilience against collusion}:

% \noindent \textbf{Resilience against premature disclosure}:
}
\section{The \textit{SilentDelivery} Protocol}
\label{s5}

% In this section, we present the proposed \textit{SilentDelivery} protocol, which is sketched in Fig.~\ref{protocol_sketch_02} and formally described in Fig.~\ref{protocol_detail_2} and Fig.~\ref{protocol_detail_3}. 
% Specifically, we first present the \textit{TIDS.send} phase and highlight its \textit{silent recruitment} component that is specially designed to handle the first main limitation of the strawman protocol.
% We then depicts the \textit{TIDS.pend} and \textit{TIDS.deliver} phases, which together forms the \textit{dual-mode execution} component for resolving the second main limitation of the strawman protocol.

% In this section, we start from analying the two key limitations of the strawman protocol and
% sketching their countermeasures in the proposed \textit{SilentDelivery} protocol.
In this section, we start by introducing our key ideas for tackling the challenges that limit the strawman protocol.
% , presenting its key design ideas and highlighting the two components that are specially designed for overcoming the two key limitations discussed in section~\ref{s3.4}.
We then present the proposed \textit{SilentDelivery} protocol and  focus on its two novel countermeasure strategies, namely \textit{silent recruitment} and \textit{dual-mode execution}.
% The proposed \textit{SilentDelivery} protocol is sketched in Fig.~\ref{protocol_sketch_02} and formally described in both Fig.~\ref{protocol_detail_2} and Fig.~\ref{protocol_detail_3}. 
We sketch the \textit{SilentDelivery} protocol in Fig.~\ref{protocol_sketch_02} and formally describe it in Fig.~\ref{protocol_detail_2} and Fig.~\ref{protocol_detail_3}.

\subsection{Tackling the challenges 
% {\bf it will be good to mention what challenges are tackled in the subsection title. E.g., tackling challenge name 1 and challenge name 2} \textcolor{red}{ 
% This subsection offers a comprehensive discussion of both the two challenges}
}
\label{s4.1}

\begin{figure}
\centering
{
   
    \includegraphics[width=9cm,height=9.5cm]{./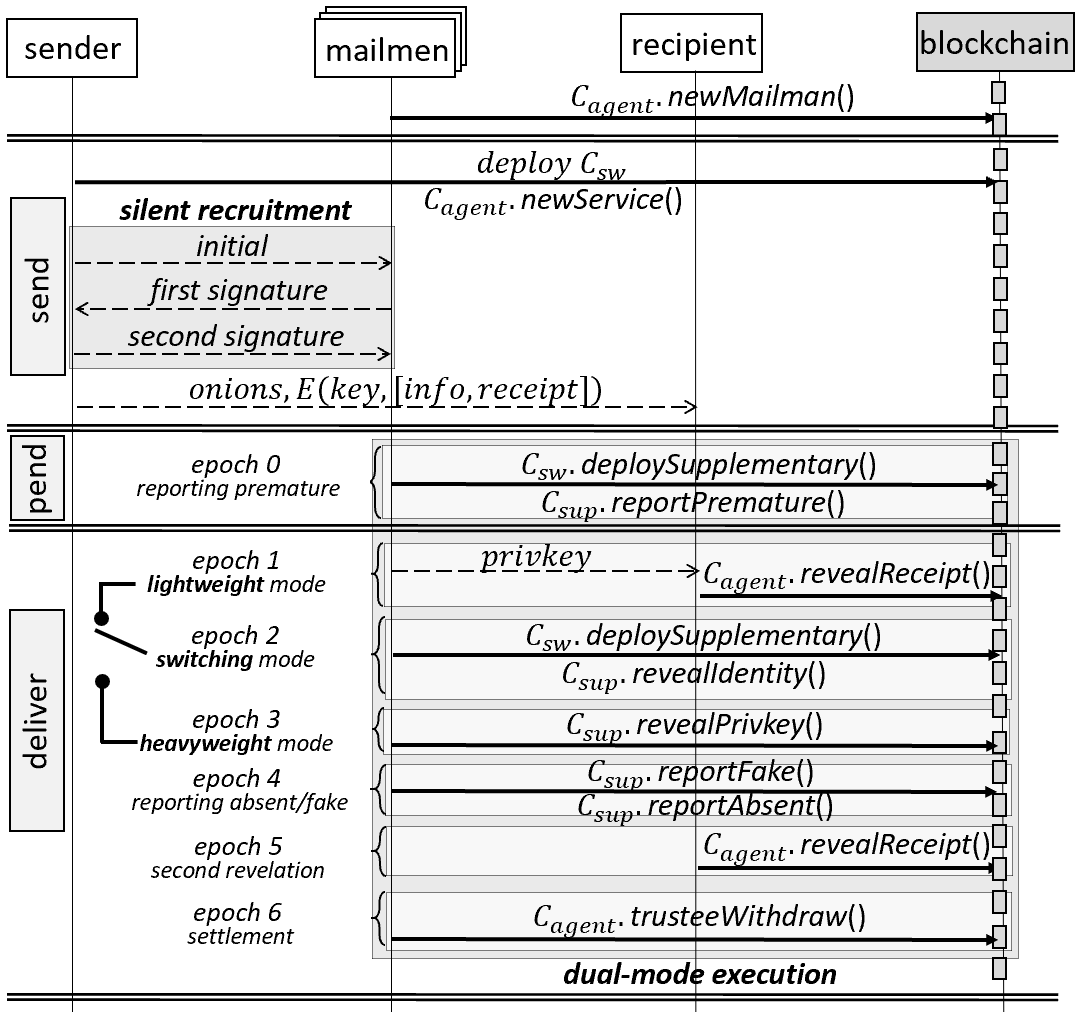}
}
%\vspace{-6mm}
\caption {\textit{SilentDelivery} protocol sketch}
%\vspace{-5mm}
\label{protocol_sketch_02} 
\end{figure}

We observe both similarities and differences in tackling the two challenges presented in Section~\ref{s4.C}.
Regarding similarities, both the two challenges demand a way for removing data from the agent contract $C_{agent}$ while they also need the removal to get rolled back when a certain condition is reached.
Specifically, a solution for the first challenge, namely the premature revelation of recruitment relationships, should remove recruitment relationships from $C_{agent}$ before \textit{TIDS.deliver} and later reveal the relationships to $C_{agent}$ during \textit{TIDS.deliver}.
Similarly, a solution for the second challenge, namely the tradeoff between scalability and availability of \textit{TIDS}, should remove non-scalable regulations from $C_{agent}$ and later reveal these regulations to $C_{agent}$ in case of any violation of them occurred.
% We observe that the potential solutions for resolving the two challenging limitations of the strawman protocol may have some properties in common.
% The first solution is expected to conceal any information that may reveal recruitment relationships from $C_{agent}$ while the second solution is demanded to remove non-scalable regulations from $C_{agent}$.
% Nevertheless, when the time goes to the range of \textit{TIDS.deliver}, the hidden recruitment relationships needs to be revealed to $C_{agent}$ so that the settlement stage can get started.
% Similarly, when any participant of the protocol violates the removed regulations, these removed regulations are expected to be re-binded with $C_{agent}$ to penalize the dishonest participants.
% To sum up, they both need on-chain data to get removed for the purpose of enhancing security or scalability. However, when a certain condition is reached, they also need the removed data to be turned back on-chain.
The main differences between them are twofold:
(1) the removed relationships should be concealed against any third party while the removed regulations should still be made public;
(2) the rollback of the removal of relationships is inevitably driven by the arrival of time-frame while the rollback of the removal of regulations is driven by violations of these regulations, which is completely avoidable when the recruited mailmen are honest.

By keeping these similarities and differences in mind, in \textit{SilentDelivery}, we design two different countermeasure strategies for tackling the two challenges.
As illustrated in Fig.~\ref{protocol_sketch_02}, the first countermeasure strategy is implemented as a \textit{silent recruitment} component within the \textit{TIDS.send} phase.
The key idea behind it is to consider the relationships also a type of private information that demands timed delivery, so both senders and mailmen should sign their recruitment agreements secretly and their signatures should also be protected with the $key$ split into $shares$. 
The second countermeasure strategy, namely \textit{dual-mode execution}, allows the protocol to be executed in two different modes.
% The second countermeasure strategy is inspired by the recently advanced techniques for improving blockchain scalability, represented by payment channel network (PCN)~\cite{dziembowski2019perun} and state channel network (SCN)~\cite{dziembowski2018general}.
%  we redesign the protocol with two modes of completing the service (i.e., \textit{dual-mode execution} in Fig.~\ref{protocol_sketch_02},), namely a lightweight mode and a heavyweight mode. 
Specifically, we cut the non-scalable regulations out of $C_{agent}$ and re-organize them as a supplementary contract $C_{sup}$. 
By default, when recruited mailmen honestly follow the removed regulations, the protocol goes with its lightweight mode without involving any $O(n)$ on-chain interactions.
If any recruited mailman violates the removed regulations, any honest mailman can rebind the removed regulations (i.e., $C_{sup}$) with $C_{agent}$ to penalize the violations, which turns the protocol into the heavyweight mode and results in a pay-cut.
Thus, the penalty of misbehaviors and the pay-cut induced by the heavyweight mode can incentivize recruited mailmen to stay honest, making the protocol stays at its lightweight mode with $O(1)$ gas cost.

% a \textit{dual-mode execution} process spreading over seven epochs.

\begin{figure}
\begin{minipage}{0.5\textwidth}
\begin{small}
\begin{mdframed}[innerleftmargin=8pt]

% \centerline{\textbf{Silent recruitment}}

% \noindent \textbf{Sender [setup]:} 
\begin{packed_enum}[leftmargin=*]
  \item Sender $S$ deploys a switch contract $C_{sw}$ and obtains both $addr(C_{sw})$ and $addr(C_{sup})$.
  % \citem Sender $S$ computes address of supplementary contract: \\
  % $addr(C_{sup}) \gets computeadd(addr(C_{sw}))$.
  \item 
  Sender $S \Rightarrow C_{agent}.newService([day,slot],[l,t,n],\\
  addr(C_{sw}),addr(C_{sup}),addr(R),hash(receipt))$.
  % \item 
  % Recipient $R \Rightarrow C_{agent}.recipientDeposit(addr(S),\\
  % addr(C_{sw}))$.
\end{packed_enum}

% \begin{mdframed}[innerleftmargin=8pt]

\begin{center}
%\vspace{-1mm}
\rule{25mm}{0.3mm}
\textbf{Silent recruitment}
\rule{25mm}{0.3mm}
%\vspace{-1mm}
\end{center} 

\noindent \textbf{Sender [initial]:} 
\begin{packed_enum}[leftmargin=*]
  \setcounter{enumi}{2}
  \citem $\dashrightarrow M$: $index$, $addr(C_{sw})$, $C_{sup}$, \\
  $vrs_{sup} \gets sig(hash(addr(C_{sw}), C_{sup}))$.
\end{packed_enum}
\noindent \textbf{Mailman [first signature]:} 
\begin{packed_enum}[leftmargin=*]
  \setcounter{enumi}{3}
  \citem Verify service information with $C_{agent}$ and $C_{sup}$.
  \citem $\dashrightarrow S$: 
  $vrs_{m} \gets sig(hash(addr(C_{sw}),index))$.
\end{packed_enum}
\noindent \textbf{Sender [second signature]:} 
\begin{packed_enum}[leftmargin=*]
  \setcounter{enumi}{5}
  \citem $vrs_{s} \gets sig(hash(addr(C_{sw}),index,vrs_{m}))$.
  \citem Create a 256-bit secret $key$ as well as a 256-bit secret $receipt$ and obtains $shares \gets SS.split(key,(t,n))$.
  % \citem Create a 256-bit secret $key$ and a $receipt$.
  % \citem Split $key$ to $n$ $shares$ through $(m,n)$ threshold secret sharing:
  % $shares \gets SS.split(key,(m,n))$.
  \citem Encrypt $shares$ to $onions$ with $pubkeys$ of selected mailmen:
  $onions \gets E(pubkeys,shares)$.
  \citem $\dashrightarrow$ each $M$: 
  $List(E(key,[index,vrs_s,vrs_m]))$, $onions$,\\
  $vrs_{sm} \gets sig(hash(List(E(key,[index,vrs_s,vrs_m])),$ \\
  $onions))$.
  % \citem $\dashrightarrow R$: 
  % $E(key,[secret, receipt])$, $onions$.
\end{packed_enum}

\begin{center}
%\vspace{-2mm}
\rule{75mm}{0.3mm}
%\vspace{-1mm}
\end{center} 

\begin{packed_enum}[leftmargin=*]
  \setcounter{enumi}{9}
  \citem Sender $S$ $\dashrightarrow R$: 
  $E(key,[info, receipt])$, $onions$, \\
  $vrs_{st} \gets sig(hash(E(key,[info, receipt]), onions))$.
\end{packed_enum}

\end{mdframed}
\end{small}
\end{minipage}

\captionof{figure}{\textit{TIDS.send} in \textit{SilentDelivery}}
%\vspace{-5mm}
\label{protocol_detail_2}
\end{figure}

\begin{figure*}
  \begin{minipage}{\textwidth}
  \begin{small}
\begin{mdframed}[innerleftmargin=8pt]

% \centerline{\textbf{Lightweight-first Enforceable Execution}}

\noindent \textbf{Epoch-0 [reporting premature]:}
\begin{packed_enum}[leftmargin=*]
  \item A mailman $M$ does the folowing and the protocol \underline{jumps to epoch-2}.
  \begin{packed_enum}[leftmargin=*]
    \item 
    % Deploy the supplementary contract $C_{sup}$ through the existed switch contract $C_{sw}$, thus switching the protocol to the heavy mode: 
    $\Rightarrow C_{sw}.deploySupplementary(addr(C_{sw}),C_{sup},vrs_{sup})$.
    \item 
    % Report the premature disclosure through 
    $\Rightarrow C_{sup}.reportPremature(index,privkey)$. 
  \end{packed_enum}
\end{packed_enum}

\noindent \textbf{Epoch-1 [lightweight mode]:} 
\begin{packed_enum}[leftmargin=*]
  \setcounter{enumi}{1}
  \citem Each mailman $M$ reveals its $privkey$.
  \citem Recipient $R$ gets $shares \gets D(privkeys,onions)$.\
  \item If $|shares|>=t$, recipient $R$ does the following and the protocol \underline{jumps to epoch-6}.
  \begin{packed_enum}[leftmargin=*]
    \item Restore $key \gets SS.restore(shares,[t,n])$ and obtain $[info,receipt] \gets D(key,[info, receipt])$.
    \item $\Rightarrow C_{agent}$: 
  $recipientReceipt(receipt,addr(S),addr(C_{sw}))$.
  \end{packed_enum}
  Otherwise, the protocol \underline{goes to epoch-2.}
\end{packed_enum}

\noindent \textbf{Epoch-2 [switching mode]:} 
\begin{packed_enum}[leftmargin=*]
  \setcounter{enumi}{4}
  \item 
  % In case that the supplementary contract $C_{sup}$ has not yet been deployed during epoch-0, any mailman $M$ could deploy $C_{sup}$ through the existed switch contract $C_{sw}$, thus switching the protocol to the heavy mode: \\
  % A mailman $M$ switches mode: 
  $M \Rightarrow C_{sw}.deploySupplementary(addr(C_{sw}),C_{sup},vrs_{sup})$.
  \citem Mailmen reveal their maintained $privkeys$ to public and collect $privkeys$ from others. 
  \item If $|shares|>=t$, mailmen should do the following, which makes the protocol \underline{enter epoch-3}.
  \begin{packed_enum}[leftmargin=*]
    \item Compute $key \gets SS.restore(shares,(t,n))$ and $List(index,vrs_s,vrs_m) \gets List(D(key,[index,vrs_s,vrs_m]))$.
    \item 
    % Reveal identities of all mailmen through 
    $\Rightarrow C_{sup}.revealIdentity(List(index,vrs_s,vrs_m))$.
  \end{packed_enum}
  Otherwise, the protocol \underline{jumps to epoch-6}.
\end{packed_enum}

\noindent \textbf{Epoch-3 [heavyweight mode]:} 
\begin{packed_enum}[leftmargin=*]
  \setcounter{enumi}{7}
  \item 
  % After identities of all mailmen have been revealed to $C_{sup}$, each mailman $M$ should reveal its $privkey$ corresponding to the $[day,slot]$ selected by sender $S$ through 
  $M \Rightarrow C_{sup}.revealPrivkey(index,privkey)$. \\
  At the end of epoch-3, the protocol \underline{goes to epoch-4}.
\end{packed_enum}

\noindent \textbf{Epoch-4 [reporting absent/fake]:} 
\begin{packed_enum}[leftmargin=*]
  \setcounter{enumi}{8}
  \item 
  % Upon detecting an absent mailman who did not reveal its $privkey$ during epoch-3, any mailman could report this misbehavior through
  $M \Rightarrow C_{sup}.reportAbsent(index)$. 
  \item 
  % Upon detecting a mailman who revealed a fake $privkey$ during epoch-3, any mailman could report this misbehavior through
  $M \Rightarrow C_{sup}.reportFake(index)$. \\
  % \item After reporting any misbehavior, a mailman could transfer the record of this misbehavior from $C_{sup}$ to $C_{agent}$ through 
  % $\Rightarrow C_{sup}.informAgent(index)$. \\
  At the end of epoch-4, the protocol \underline{goes to epoch-5}.
\end{packed_enum}

\noindent \textbf{Epoch-5 [second revelation]:} 
\begin{packed_enum}[leftmargin=*]
  \setcounter{enumi}{11}
  \item Recipient submits the $receipt$ through 
  $\Rightarrow C_{agent}.recipientReceipt(receipt,addr(S),addr(C_{sw}))$. \\
  At the end of epoch-5, the protocol \underline{goes to epoch-6}.
\end{packed_enum}

\noindent \textbf{Epoch-6 [settlement]:} 
\begin{packed_enum}[leftmargin=*]
  \setcounter{enumi}{12}
  \item Depending on how the protocol terminates, the protocol participants can request withdrawals at $C_{agent}$.
\end{packed_enum}

\end{mdframed}
\end{small}
\end{minipage}

\captionof{figure}{\textit{TIDS.pend} and \textit{TIDS.deliver} in \textit{SilentDilivery}. They together form the \textit{dual-mode execution}.}
%\vspace{-6.5mm}
\label{protocol_detail_3}
\end{figure*}

\subsection{Silent recruitment}
\label{s4.2}

Before presenting the \textit{TIDS.send} phase, it is worth noting that \textit{SilentDilivery} demands each mailman to create a list of $[privkey,pubkey]$ key pairs for a list of future time-frames $[day,slot]$, maintain all the $privkeys$ by itself and submit all the $pubkeys$ to the agent contract $C_{agent}$ during registration via $newMailman()$.
A private key in a dedicated key pair is designed to be revealed by the mailman at a prescribed time point. 
Besides the onion planned to be decrypted, there will be no other information protected by the private key.
In other words, the private key is designed to never be reused.
During \textit{TIDS.send}, sender $S$ first deploys a switch contract $C_{sw}$, which contains a single function \textit{deploySupplementary()}.
Just like the name implies, through $C_{sw}$, any honest mailman can deploy the supplementary contract $C_{sup}$ and turn the protocol from the default lightweight mode into the heavyweight mode.
Since the only transaction that is allowed to be sent by $C_{sw}$ creates $C_{sup}$, the address of $C_{sup}$ is deterministic and can be computed by sender $S$ at protocol step 1.
After that, sender $S$ selects mailmen and sets up a service with $C_{agent}$ without disclosing any (side) information revealing recruitment relationships.
% \textcolor{red}{
Similar to the selection of mailmen in the strawman protocol, the selection of mailmen in the Silent Recruitment is also random. This is a straightforward way of defending against 
adversaries that try to leverage a few accounts to possess the majority of mailmen recruited by a particular sender.
% As can be seen, the service inputs contain no information that may disclose recruitment relations.
% Then, \textit{SilentDilivery} also demands recipient $R$ to lock a $\mathsf{\Xi} deposit)$ in $C_{agent}$ because a dishonest recipient may deliberately
From then on, \textit{silent recruitment} is executed via private channels in the form of a three-way handshake.
Specifically, sender $S$ initials the handshake by giving each mailman the two contracts $C_{sw}$ and $C_{sup}$, a signature $vrs_{sup}$ regarding the two contracts and an $index$ assigned to the mailman.
% details of $C_{sw}$ and $C_{sup}$ as well as signature $vrs_{sup}$ regarding the two contracts to a group of mailmen randomly selected from the registered one.
Upon getting contacted, each mailman verifies the correctness of service information and sends back a signature $vrs_{m} \gets sig(hash(addr(C_{sw}),index))$ to sender, which implies that the mailman has agreed to take charge of this service.
% take $index$ of the service corresponding with $addr(C_{sw})$.
% on providing the service.
Upon receiving mailman's signature, sender $S$ also generates a signature $vrs_{s} \gets sig(hash(addr(C_{sw}),index,vrs_{m}))$, which says that sender $S$ has agreed to recruit the signer of $vrs_{m}$ in this service.
% the service corresponding with $addr(C_{sw})$.
Then, similar to the strawman protocol, sender $S$ creates $shares$ of a $key$ and a $receipt$.
However, unlike the strawman protocol, these $shares$ are not directly given to mailmen. Instead, at protocol step 8, each $share$ is iteratively encrypted with $l$ $pubkeys$ from $l$ different recruited mailmen, where $l$ is a parameter determined by sender $S$ at protocol step 2.
In this way, each $share$ is turned into an $onion$ and its recovery needs $privkeys$ maintained by $l$ mailmen.
This design allows the premature disclosure of a $share$ to be verified through pairing the disclosed $privkeys$ with the $pubkeys$ in $C_{agent}$, instead of having to rely on $hash(share)$ that reveals recruitment relationships.
Finally, through private channels, sender $S$ broadcasts all these $onions$, transmit a list of encrypted tuple $[index,vrs_s,vrs_m]$ to each recruited mailman and encrypted $[info,receipt]$ to the recipient.
% \textcolor{red}{
With the signatures $vrs_{sm}$ and $vrs_{st}$, both mailmen and recipient will be able to verify the received messages and request the sender to resend the messages if needed.

\subsection{Dual-mode execution}
\label{s4.3}
%\vspace{-1mm}

Next, we present the \textit{dual-mode execution}, which incentivizes mailmen to make the protocol get executed in the lightweight mode, reducing the service cost from $O(n)$ to $O(1)$.
% By default, \textit{SilentDilivery} is executed in the lightweight mode, resulting in $O(1)$ gas cost by assuming all recruited mailmen are honest.
% In case of any misbehaviors, the protocol is switched to the heavyweight mode, which spends gas in $O(n)$ and reduces remuneration received by recruited mailmen. Thus it incentivizes the mailmen to stay honest to earn a higher profit.  
% which incentivizes all recruited mailmen to stay honest to make the protocol get executed in $O(1)$ lightweight mode, instead of performing any misbehavior to switch the protocol into $O(n)$ heavyweight mode and reduce their remuneration.
% allowing the protocol to be executed by default in $O(1)$ lightweight mode
% while getting switched to $O(n)$ heavyweight mode in case of any misbehavior occurred.
% our new design, dual-mode execution, that turns the heavyweight execution in the strawman protocol into the lightweight-default dual-mode execution in \textit{SilentDilivery}, which consists of seven epochs that spread across \textit{TIDS.pend} and \textit{TIDS.deliver}.
% from a heavyweight single-mote execution into a lightweight-first dual-mode execution, which consists of seven epochs that spread across \textit{TIDS.pend} and \textit{TIDS.deliver}.
We design the \textit{dual-mode execution} to include six epochs spreading across the \textit{TIDS.deliver} phase.
Specially, we consider the \textit{TIDS.pend} phase as \textit{epoch-0}, during which any mailman can switch the default lightweight mode to the heavyweight mode by deploying $C_{sup}$ via $C_{sw}$ and report a prematurely disclosed $privkey$.
% \textcolor{red}{
Specifically, the protocol allows any mailman, who knows the identity of a mailman recruited by a particular sender and also obtains the corresponding $privkey$ used by that mailman to encrypt the onion for that sender, to report a premature disclosure.~\footnote{
\begin{scriptsize} 
% \textcolor{red}{
Any party obtaining the $privkey$, even if not controlling a mailman, can fairly cooperate with any mailman via smart contracts to report a premature disclosure. 
Concretely, the party (say party A) can claim the ownership of a disclosed $privkey$ by creating a trading smart contract to make a deal with another party (say party B) who owns a mailman account. 
Then, if party B wants to obtain this disclosed $privkey$, party B should lock an amount of money (say one third of the amount of security deposit) to the trading smart contract.
Finally, party A needs to disclose $privkey$ to the trading smart contract, which triggers the trading contract to verify two things, namely the correctness of the $privkey$ using $pubkey$ and the status of supplementary contract $C_{sup}$ (i.e., whether $C_{sup}$ has not yet been deployed). 
After these verifications, the trading contract will automatically deploy the supplementary contract $C_{sup}$ on behalf of party B and transfer the locked money to party A.
As a result, party B will become the reporter who has deployed the supplementary contract $C_{sup}$ and both the two parties will receive monetary rewards.
\end{scriptsize}}~\footnote{
\begin{scriptsize} 
% \textcolor{red}{
In the case that two reporters simultaneously send out the reporting transaction and the two transactions happen to be grouped in the same block by a blockchain miner, only the first transaction executed by the miner could set the transaction creator to be the reporter recorded in the contract $C_{agent}$. The second transaction will be rejected.
\end{scriptsize}}
Later, if the reported $privkey$ is proved to be correct or the suspected mailman becomes absent in \textit{epoch-4}, the reporter could receive a monetary reward from the security deposit paid by the suspected mailman as the incentive, and the rest of the confiscated security deposit will be used to compensate for the cost of mode switching.
In contrast, if the reported $privkey$ is proved to be incorrect after the suspected mailman has revealed the corresponding $privkey$ in \textit{epoch-4}, the reporter who is also a mailman, will lose the locked security deposit. The security deposit will be used to compensate for the cost of mode switching.
Then, in \textit{epoch-1}, if the recruited mailmen honestly obey the non-scalable regulations shifted from $C_{agent}$ to $C_{sup}$ by revealing their $privkeys$ to recipient $R$ via private channels, $R$ will be able to recover $key$ and acquire $info$ and $receipt$, which makes the protocol reach a successful termination 
% in lightweight mode without any $O(n)$ on-chain operation 
and jump to \textit{epoch-6}, namely the settlement stage.
In case that some mailmen violate the regulations in $C_{sup}$ and result in a failure of recovering $key$, the protocol will enter \textit{epoch-2}, during which any honest mailman can switch the protocol into the heavyweight mode (i.e., \textit{epoch-3}) by deploying $C_{sup}$ via $C_{sw}$. 
After that, the recruited mailmen reveal $privkeys$ to all via Whisper protocol, decrypt $onions$ to $shares$,  recover $key$ from $shares$ and finally acquire the list of tuple $[index,vrs_s,vrs_m]$. 
Each tuple can prove a recruitment relationship to $C_{sup}$ by revealing an agreement signed by a mailman via $vrs_{m}$ and by a sender via $vrs_{s}$.
If the $key$ can not get recovered, the protocol researches a failed termination and jumps to \textit{epoch-6}.
During \textit{epoch-3}, the protocol is executed in the heavyweight mode, demanding each recruited mailman to reveal its $privkey$ to $C_{sup}$ via $revealPrivkey()$.
Furthermore, any absent or fake $privkey$ that were not appropriately submitted during \textit{epoch-3} will be identified as misbehavior and reported during \textit{epoch-4}, resulting in the dishonest mailman to lose $\mathsf{\Xi} deposit$.
After that, \textit{epoch-5} provides the second chance of making the protocol end with success, though in heavyweight mode.
% \textcolor{red}{
Finally, during \textit{epoch-6}, if the service is failed, each mailman could withdraw its deposit and the sender could withdraw remained fee anytime after epoch-5. 
Otherwise, the service is successful. Each mailman, with no misbehavior, could withdraw its deposit, remuneration or reporting award. In case that the protocol terminates via lightweight mode, mailmen need to follow step 6 and 7.1 to prove relationships to request withdrawals. The sender could withdraw remained fee anytime after epoch-5.

% cut contract to agent and supplementary\\
% agent on-chain while supplementary off-chain\\
% supplementary everyone know/off-chain before \\
% subject: supplementary\\
% line: dispute [not firm]\\
% before: off-chain, mailman knows\\
% after: on-chain

% The first one demands a solution that can offload the recruitment identifiers from blockchain before the time-frame while uploading the identifiers back blockchain during the time-frame.
% The second challenge demands a solution that can offload the non-scalable regulations from $C_{agent}$ when recruited mailmen are self-disciplined while uploading these regulations back blockchain in case of a dispute occurred.
% To sum up, they both need on-chain data to get offloaded for the purpose of enhancing security or reducing cost.
% However, when a certain condition is reached, they also need the offloaded data to be turned back on-chain.
% In other words, the recovery of offloaded data is time-driven in the first case while event-driven in the second case.

% As illustrated by Fig.~\ref{protocol_sketch_02}, the first challenge is mainly handled in \textit{TIDS.send} through a process named \textit{Silent recruitment}, which is designed as a three-way handshake process.
% We design

% Specifically, the non-scalable regulations are cut out of $C_{agent}$ and re-organized as a supplementary contract $C_{sup}$, which is signed by sender and maintained by all mailmen.
% If all recruited mailmen are honest, there will be no need for $C_{sup}$ to be re-deployed.
% When a dispute occurs, 

\section{Security and cost analysis}

We next analyze the security of \textit{SilentDelivery} in terms of its termination, attack resilience and availability properties.
After that, we analyze the cost of \textit{SilentDelivery}.

\subsection{Termination}
% \textcolor{red}{
A fair trading should guarantee that honest service requestors only need to pay service fee when the prescribed service is successful and honest service providers always receive service fee when the prescribed service is successful.
As illustrated by Fig.~\ref{protocol_detail_3}, the protocol can always terminate by reaching \textit{epoch-6}.
During the settlement stage, fairness is guaranteed in the following ways:\\
\noindent 
(1) The $\mathsf{\Xi} remunaration$ is only charged from sender $S$ when the service is successful and is only paid to honest mailmen who have no detected misbehavior.\\
\noindent 
(2) As described in protocol step 1, 9 and 10, $\mathsf{\Xi} deposit$ is confiscated only when a recruited mailman fails to protect its $privkey$ before \textit{TIDS.deliver} or fails to properly reveal $privkey$ during \textit{TIDS.deliver}. In other words, since the $privkeys$ are necessary for reporting recruited mailmen and are only known by the recruited mailmen, it becomes difficult for any party, including senders and recipients, to unjustly accuse a mailman by abusing the reporting functions.
This property was not achieved in the strawman protocol, where a dishonest sender always owns the ability to swindle $\mathsf{\Xi} deposit$ of a recruited mailman because the $share$ reported through $reportPremature(share)$ is not uniquely known by the mailman, but also by the sender.\\
% (3) As illustrated by protocol step  

% report the premature disclosure of $share$ 

% $\mathsf{\Xi} deposit$ is confiscated only when its a recruited mailman violates the protocol

% the protocol guarantees that 

% (1) $\mathsf{\Xi} remunaration$ should be and only be charged if the service is successful and 
% (2) $\mathsf{\Xi} deposit$ should be and only be confiscated if its owner really violates the protocol.

\subsection{Attack resilience}
For analyzing the attack resilience, we assume that there exists an adversary 
seeking the premature disclosure of private information $info$ before the future time-frame.
% aiming at attacking a specific sender $S$ by obtaining the private information $info$ before the future time-frame.
% For instance, Alice's competitors may want to learn the content of her business proposal before the board meeting when Alice is using the service to release the proposal exactly during the board meeting time.
% while the rest of EOAs are rational adversaries. The malicious adversary may choose to launch either a time difference attack or an execution failure attack. 
There are two approaches to launch such an attack, namely mailman Bribery attack~\cite{dong2017betrayal} and Sybil attack~\cite{douceur2002Sybil}. 
% \textcolor{red}{
Before presenting the details, we denote the threshold of secret sharing as $t$, the layer of onion as $l$, the security deposit charged to a single mailman as $d$, the number of innocent mailmen as $v$, the number of malicious mailmen as $x$ and the percentage of malicious mailmen as $p_M$.

% Through mailman bribery, the adversary can deploy a smart contract with a fund larger than $\mathsf{\Xi} deposit$ and use this smart contract as bait to bribe a recruited mailman.
% , even if the mailman's identity is not known. 
% For example, to obtain a specific mailman's $privkey$, the bribery contract can be set with a condition `If any Ethereum account can submit a $privkey$ pairing a specific $pubkey$ in $C_{agent}$, the account can withdraw the fund locked in bribery contract.' 

% \noindent 
%\textcolor{red}{\textbf{Bribery attack:}
The \textit{silent recruitment} realized in \textit{SilentDelivery} makes it difficult for an adversary to succeed via bribery as there is no information regarding the recruitment relationships that can help the adversary distinguish whether a mailman has been recruited or not.%}

%\textcolor{red}{
However, a powerful adversary may identify the identities of some mailmen recruited by a particular sender by launching side-channel attacks.
Then, the adversary could deploy a smart contract with a fund larger than the security deposit $d$ and use this smart contract as bait to bribe a mailman to obtain a specific private key.
Since the fund in the bribery contract is larger than the security deposit, a rational mailman may choose to reveal the private key to the bribery contract to increase its profit.
We now analyze the cost to successfully acquire the private information before the prescribed time-frame via bribery attack.%}

\begin{lemma}
\label{lemma1}
{An adversary needs to spend an amount of $tld$ to bribe mailmen for successfully obtaining the private information before the prescribed time-frame, where $t$ denotes the threshold, $l$ denotes the layer of onion and $d$ denotes the security deposit charged to a single mailman.}
\end{lemma}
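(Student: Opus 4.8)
The plan is to lower-bound the bribery cost by a chain of reductions that pins down exactly which secrets the adversary is forced to purchase and the unit price of each, and then to multiply. First I would argue that before the time-frame, obtaining $info$ is equivalent to reconstructing the decryption key $key$: the recipient receives $info$ only in the encrypted form $E(key,[info,receipt])$ (protocol step 10), and before the time-frame neither the recipient nor any honest mailman reveals $key$ — indeed none of them even knows it, since the mailmen hold only $privkeys$ and onions, the recipient holds only ciphertexts, and $S$ (honest here) never discloses it — so the adversary's only route to $info$ is to reconstruct $key$ from shares. By the security of $(t,n)$-threshold secret sharing, $t-1$ or fewer of the $n$ shares yield no information about $key$; hence the adversary must obtain at least $t$ of the shares $share_1,\dots,share_n$.

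Next I would descend from shares to private keys and attach a price. Each $share_j$ exists outside $S$ only inside the onion $onion_j = E(pubkey_{j_1},E(pubkey_{j_2},\dots,E(pubkey_{j_l},share_j)\dots))$ built from the public keys of $l$ distinct recruited mailmen; the onions are broadcast, but peeling $onion_j$ requires all $l$ private keys $privkey_{j_1},\dots,privkey_{j_l}$. Each $privkey_i$ is by design never reused and is held only by mailman $M_i$, and an honest $M_i$ does not disclose it before the time-frame, so the adversary can obtain $privkey_i$ only by bribing $M_i$. Moreover a premature disclosure of $privkey_i$ is reportable (through $reportPremature$, verified against the $pubkey_i$ registered in $C_{agent}$) and costs $M_i$ its deposit $d$; a rational $M_i$ therefore accepts a bribe only if it exceeds $d$, so every private key the adversary buys costs it at least $d$. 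Putting the pieces together, the adversary must buy the private keys underlying $t$ full onions, i.e. $t\cdot l$ private keys when these onions are assigned disjoint mailman sets, each at unit cost essentially $d$, for a total of $tld$.

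The step that needs the most care is the last one, because a single bribed mailman's private key helps peel \emph{every} onion that mailman participates in, so an adversary who deliberately targets heavily overlapping onions could in principle buy fewer than $tl$ keys. I would close this gap by invoking the random assignment of mailmen to onions that \emph{silent recruitment} already uses: in the operating regime $tl \le n$ chosen by the sender via the parameters $[l,t,n]$, the $l$-subsets underlying the $t$ onions the adversary targets are essentially disjoint, so the expected number of distinct private keys required is $tl$ and the expected cost is $tld$; a strict worst-case restatement would replace "$tl$" by "the minimum number of distinct mailmen covering $t$ onions", which the parameter choice keeps at $tl$. Once this counting point and the incentive argument (bribe $>d$ because disclosure is punished by losing $d$) are in place, the remaining reductions are essentially definitional, given the threshold-sharing and onion constructions.
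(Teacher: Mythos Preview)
Your proposal is correct and follows exactly the paper's reasoning: restoring $key$ requires $t$ shares, peeling each onion requires bribing $l$ distinct mailmen at unit cost $d$ (enforced by the reporting-premature penalty), yielding $tld$. The paper's own proof is in fact terser than yours and does not engage the overlap concern you raise in your final paragraph; it simply multiplies $t\cdot l\cdot d$ without that caveat, so your treatment is, if anything, more careful than the original.
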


\begin{proof}
{An adversary should aim at restoring $key$ before the prescribed time-frame, which means at least $t$ $shares$ should be obtained. To obtain a single $share$, the adversary needs to deploy $l$ bribery smart contracts to collect private keys from $l$ different mailmen, which, due to the existence of the reporting premature mechanism, will cost $ld$. Therefore, the cost of obtaining $t$ $shares$ will be $tld$.}
\end{proof}

% \noindent 
With Sybil attacks~\cite{douceur2002Sybil},
an adversary can create an arbitrary amount of Ethereum accounts and register all these accounts to $C_{agent}$ as mailmen.

\begin{lemma}
\label{lemma2}
% It is infeasible for the premature disclosure of $info$ to succeed through Sybil attacks because 
The minimum amount of security deposit that an adversary needs to deposit to obtain $t$ $shares$ on average via controlled mailmen
would be $(l-1)vd$, where $v$ denotes the number of registered mailmen not controlled by the adversary and $d$ denotes the security deposit charged to a single mailman.
\end{lemma}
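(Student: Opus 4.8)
The plan is to analyse the Sybil attack probabilistically, reducing ``obtain $t$ shares on average'' to a counting statement about how many onions a single honest mailman ``blocks''. In a Sybil attack the adversary registers $x$ accounts through $newMailman()$, each one locking a deposit $d$, so the quantity to be bounded is the total locked stake $xd$. The registered pool then holds $v+x$ mailmen, $v$ of them honest, and by \textit{silent recruitment} every sender must choose her $n$ mailmen uniformly at random from this pool without the adversary being able to bias the draw, and must wrap each of the $n$ shares of $key$ in an onion whose $l$ layers are $l$ distinct recruited mailmen. Recovering the share inside an onion needs all $l$ of its layer private keys, so the adversary recovers that share if and only if it controls all $l$ of those mailmen, i.e.\ none of them is honest; and by the threshold property of the $(t,n)$ secret sharing the adversary needs at least $t$ such shares before the time-frame.

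First I would write the number of shares the adversary obtains as $n-B$, where $B$ counts the onions containing at least one honest recruited mailman, so that ``obtaining $t$ shares on average'' means $\mathbb{E}[n-B]\ge t$, i.e.\ $\mathbb{E}[B]\le n-t$. Next I would bound $\mathbb{E}[B]$: letting $H$ be the random number of honest mailmen among the $n$ recruited, each recruited mailman occupies $l$ of the $nl$ layer slots and hence sits in $l$ of the $n$ onions, so $\mathbb{E}[B\mid H]\le lH$, and when $lH\ll n$ this union bound is essentially tight because the onions blocked by distinct honest mailmen rarely overlap, giving $\mathbb{E}[B]\approx l\,\mathbb{E}[H]$. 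Since recruitment is a uniform sample of size $n$ from $v+x$ registered mailmen, $\mathbb{E}[H]=n\,v/(v+x)$. Combining, the attack works on average iff $l\,n\,\tfrac{v}{v+x}\lesssim n-t$, i.e.\ $\tfrac{v}{v+x}\lesssim\tfrac1l\bigl(1-\tfrac tn\bigr)$; treating the lower-order term $t/n$ as negligible this is $\tfrac{v}{v+x}\le\tfrac1l$, equivalently $x\ge(l-1)v$, and multiplying through by $d$ yields the claimed minimum stake $(l-1)vd$.

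The hardest part is turning ``$\mathbb{E}[B]\approx l\,\mathbb{E}[H]$'' into something tight in both directions: the union bound only gives $\le$ (hence that $(l-1)vd$ \emph{suffices}), so proving it is also \emph{necessary} needs a matching inclusion--exclusion lower bound on $\mathbb{E}[B]$ showing the overlaps are lower order when $l\,\mathbb{E}[H]\ll n$. A secondary subtlety is the $t/n$ term: kept exactly, the threshold is $x\ge\lceil\frac{l-1+t/n}{1-t/n}\,v\rceil$, so $(l-1)vd$ is really the leading-order value. An alternative, cleaner derivation treats the $l$ layers of each onion as independent draws that are adversarial with probability $p_M=\tfrac{x}{v+x}$, giving $\mathbb{E}[\text{shares}]=n\,p_M^{\,l}$ and reproducing $(l-1)vd$ exactly when the protocol's availability analysis fixes the ratio $t=n\bigl(\tfrac{l-1}{l}\bigr)^{l}$; I would use whichever of the two matches the parameter relation the protocol actually imposes.
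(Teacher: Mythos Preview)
The paper's proof is precisely your ``alternative, cleaner derivation'': it models each of the $l$ onion layers as an independent draw that is adversarial with probability $p_M=x/(v+x)$, so the expected number of captured shares is $np_M^{\,l}$; it then writes the deposit required (normalised to $t$ target shares) as
\[
\widehat d \;=\; \frac{xd}{np_M^{\,l}}\cdot t \;=\; \frac{vdt}{n}\cdot\frac{p_M^{\,1-l}}{1-p_M},
\]
sets $\partial\widehat d/\partial p_M=0$, obtains $p_M^{*}=(l-1)/l$, and reads off $x^{*}=(l-1)v$, whence the stated minimum $(l-1)vd$. There is no union-bound step and no counting of blocked onions.

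Your primary route is therefore genuinely different. Two comments on it. First, the assumption ``each of the $n$ recruited mailmen occupies exactly $l$ of the $nl$ layer slots'' is a combinatorial-design constraint the paper never imposes; the paper simply treats layer draws as independent, which is why the Binomial $B(n,p_M^{\,l})$ appears directly. Second, in your route the answer $(l-1)vd$ only emerges after you discard the $t/n$ term, and you have to argue separately that the union bound is tight; the calculus route sidesteps both issues because the optimiser $p_M^{*}=(l-1)/l$ is independent of $t$ and $n$. On the other hand, your final remark is sharper than the paper: in the paper's last line $\widehat d_{\min}$ is silently identified with $x^{*}d$, which is exact only when $t=n\bigl((l-1)/l\bigr)^{l}$, exactly the condition you isolate.
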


\begin{proof}
%The malicious adversary may create as many EOAs as he wants
%{]bf Explain Sybil attack and the reference ~\cite{douceur2002Sybil} in the proof.}
The situation refers to step 2 of the \textit{TIDS.send} phase in \textit{SilentDelivery} protocol, where the registered mailmen can be divided into two groups, namely the (malicious) ones controlled by an adversary through Sybil attacks and the (innocent) ones not controlled by the adversary.
By denoting the number of innocent mailmen as $v$ and the number of malicious mailmen as $x$, we get 
\begin{equation}
p_M=\frac{x}{x+v} \to x=\frac{vp_M}{1-p_M}
\end{equation}
% $$p_M=\frac{x}{x+v} \to x=\frac{vp_M}{1-p_M}$$ 
where $p_M$ denotes the percentage of malicious mailmen.
To obtain a single $share$, all the $l$ mailmen providing $privkeys$ for recovering this $share$ from an $onion$ should be selected from the malicious group, giving its probability $p_M^{l}$. Since there are $n$ shares in total, the overall process can be viewed as a Binomial distribution $B(n,p_M^{l})$ with mean $np_M^{l}$. 
Then, the amount of security deposit $\widehat{d}$ that should be deposited by the adversary to obtain $M$ $shares$ on average would be:
% make $np_M^{l}=t$ would be
% $$\frac{\widehat{d}}{xd}=\frac{t}{np_M^{l}}$$
% , which makes
\begin{equation}
\widehat{d}=\frac{xd}{np_M^{l}} \cdot t=\frac{vp_M}{1-p_M} \cdot \frac{dt}{np_M^{l}} = \frac{vdt}{n} \cdot \frac{p_M^{1-l}}{1-p_M}
\end{equation}
% $$\widehat{d}=\frac{xd}{np_M^{l}} \cdot t=\frac{vp_M}{1-p_M} \cdot \frac{dt}{np_M^{l}} = \frac{vdt}{n} \cdot \frac{p_M^{1-l}}{1-p_M}$$
% .
Since the adversary cannot control $(v,d,t,n)$, to minimize $\widehat{d}$, by setting $\frac{\partial \widehat{d}}{\partial p_M}=0$: 
% $$f(p_M)=\frac{vdm}{n} \cdot \frac{p_M^{2-l}}{1-p_M}$$ and compute 
% $$f'(\widehat{d})=0$$
% , which gives
\begin{equation}
\frac{(1-l)p_M^{-l}}{1-p_M} + \frac{p_M^{1-l}}{(1-p_M)^2}=0 \to p_M=\frac{l-1}{l}
\end{equation}
% $$\frac{(1-l)p_M^{-l}}{1-p_M} + \frac{p_M^{1-l}}{(1-p_M)^2}=0 \to p_M=\frac{l-1}{l}$$
Therefore, when $\widehat{d}$ is minimized:
\begin{equation}
x=v \cdot \frac{l-1}{l} \cdot l = (l-1)v \to \widehat{d}_{min}=(l-1)vd
\end{equation}
% $$x=v \cdot \frac{l-1}{l} \cdot l = (l-1)v \to \widehat{d}_{min}=(l-1)vd$$
\end{proof}
% In Figure~\ref{hipc_0304}, we present the computed schedule success rate when 50\% of mailmen are controlled by a malicious adversary who aims at launching a time difference attack. As can be seen, smaller $m$ while larger $n$ and $l$ help enhancing the resistance against time difference attacks performed through Sybil attack.

% For example, when $v=2500$, $l=5$ and $d=\$200$, $\widehat{d}_{min}$ will be two million dollars.
% \textcolor{red}{
We emphasize that it is hard to entirely prevent bribery attack and Sybil attack, especially in permissionless blockchains.
In this paper, we assume that an adversary has a bounded ability, which refers to a bounded amount of money to be invested in a specific attack. 
Hence, we can quantify the attack resistance with the amount of money that an adversary needs to spend.
% In practice, $d$ and $v$ are system parameters and a sender can customize $l$ based on desirable Sybil attack resistance $\widehat{d}_{min}$.
% }

% \textcolor{red}{
The resilience against Sybil attack depends on three factors. Besides $l$, the two other factors are $v$, the number of registered mailmen not controlled by the adversary and $d$, the security deposit charged to a single mailman. 
In other words, $resilience \propto (l-1)vd$, as in lemma 2.
In practice, a sender can customize $l$ based on desirable Sybil attack resistance $\widehat{d}_{min}$.
Besides, to make the value of $l$ in a reasonable range, we may adjust the other two factors, namely $v$ and $d$. 
It may not be feasible to control $v$. 
However, it is possible to periodically adjust the value of $d$ based on the recent number of registered mailmen and a default value of $l$ that is not large (say $l = 5$).
This is similar to periodically adjusting the mining difficulty in PoW blockchains such as the Bitcoin.

\begin{figure}
\centering
%%%\vspace{-3 mm}
\subfigure[{\small $n=5$}]
{
   \label{hipc_01}
   \includegraphics[width=0.75\columnwidth]{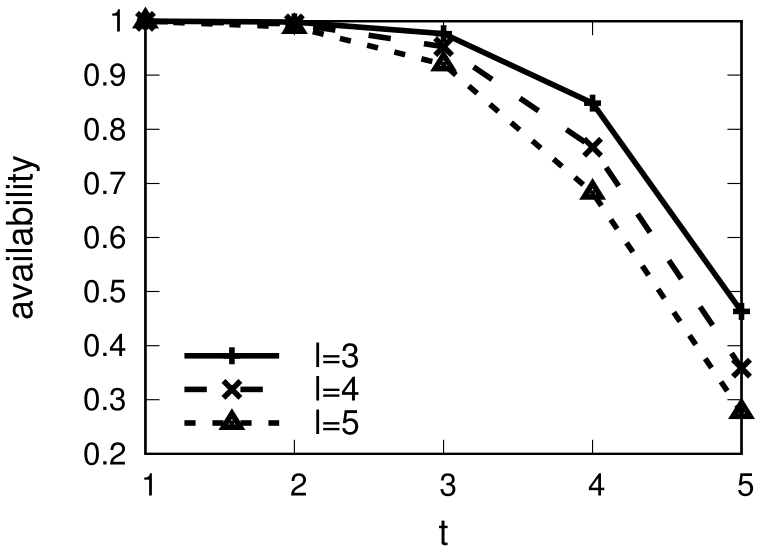}
}
\subfigure[{\small $n=10$}]
{
  \label{hipc_02}
    \includegraphics[width=0.75\columnwidth]{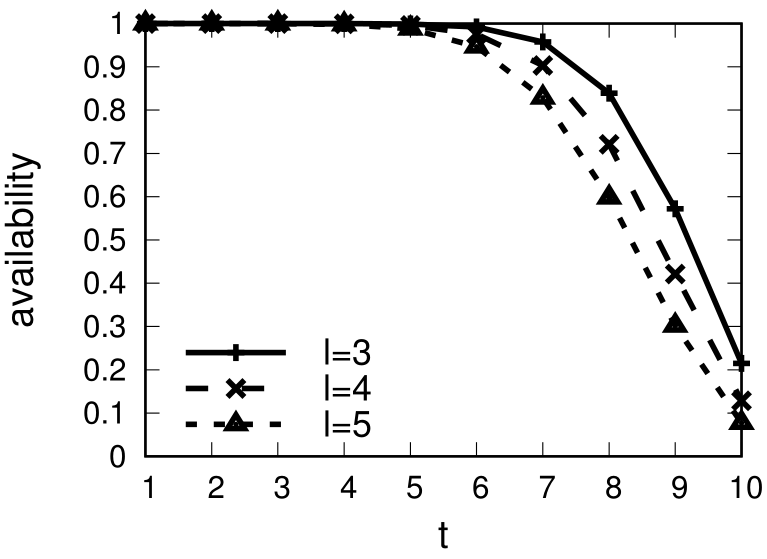}
 }
%\captionsetup{font=bf}
%\vspace{-3mm}
\caption{Service availability when mailman availability is 95\%}
\label {hipc_0102}
%\vspace{-5mm}
\end{figure}

\subsection{Availability}
It is possible that a recruited mailman violates the protocol inadvertently, such as being absent during \textit{TIDS.deliver} due to unexpected network issues.
Considering that the decentralized service providers, namely the mailmen are not as professional as a centralized service provider, a very strict protocol that demands each mailman to maintain 99.9\% availability would be unpractical and may make the bar of being a mailman too high.
On the other hand, we would not want the bar to stay too low because otherwise each service needs to recruit a great number of mailmen to succeed.
A good balance can be made when mailmen can maintain relatively high availability while \textit{TIDS} can achieve very high availability, such as three nine or even four nine availability.

\begin{lemma}
\label{lemma2}
\textit{SilentDelivery} can achieve three nine or even four nine availability when mailmen maintain 95\% availability.
\end{lemma}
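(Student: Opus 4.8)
The plan is to model the success of the delivery protocol as a probabilistic event governed by how many of the $n$ recruited mailmen correctly reveal their material during \textit{TIDS.deliver}. I would treat each mailman as independently available with probability $a = 0.95$ (and hence unavailable with probability $1-a$), so that the number of available mailmen follows a Binomial distribution $B(n, a)$. The service is available precisely when the recipient (or the mailmen in the heavyweight fallback) can recover $key$, which under $(t,n)$-threshold secret sharing requires at least $t$ usable shares. However, because each $share$ is wrapped as an $l$-layer onion requiring $l$ distinct mailmen's $privkeys$, I must be careful: in the simplest model where the $l$ onion layers for a given share are assigned to $l$ distinct mailmen, a share is recoverable iff all $l$ of its contributing mailmen are available, which happens with probability $a^l$ per share; but in the first-round lightweight path the recipient decrypts whatever onions it can, so I would either adopt the clean approximation that a share is lost with the per-mailman failure probability (when $l$ is small and shares share layers) or carry the $a^l$ factor through explicitly. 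The cleanest route for the statement is to define $q$ as the probability a single share survives and write the service availability as $\Pr[\text{Binomial}(n,q) \ge t]$.

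With that setup, the core of the argument is a direct numerical computation: for the two parameter regimes plotted in Fig.~\ref{hipc_0102} (namely $n=5$ and $n=10$), and for suitable small choices of $t$ (and $l$), evaluate
\begin{equation}
\Pr[\text{service available}] \;=\; \sum_{k=t}^{n} \binom{n}{k} q^{k} (1-q)^{\,n-k},
\end{equation}
and exhibit concrete $(t,n,l)$ triples for which this sum exceeds $0.999$ (three nines) or $0.9999$ (four nines) given the per-mailman availability $a=0.95$. The proof is therefore less a deductive chain and more a verification that the tail of the Binomial CDF clears the stated thresholds; I would present a small table of the attained availabilities for representative $t$ values, mirroring the figure, and invoke monotonicity of the Binomial tail in $n$ and in $a$ to argue robustness of the conclusion.

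The step I expect to be the main obstacle is pinning down the right independence and survival model for a share in light of the onion construction and the two-phase (lightweight then heavyweight) recovery path. A naive ``each share lost independently with probability $1-a$'' ignores the onion layering, while a fully faithful model must account for correlations when the same mailman appears in multiple onions and for the fact that \textit{epoch-2} through \textit{epoch-5} give additional chances for $privkeys$ to surface. I would handle this by stating the modeling assumption explicitly (independent mailman availability; a share counts as recovered if its $l$ onion contributors are all available in at least one of the recovery rounds), arguing this is a conservative underestimate of the true availability, and then showing the bound still clears three/four nines --- so that the looser real protocol does at least as well.

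Secondarily, I would need to confirm that the chosen $l$ is consistent with the Sybil-resistance requirement from Lemma~\ref{lemma2} (i.e., $l$ cannot be taken trivially small just to boost availability), so the availability claim and the security claim are simultaneously satisfiable for the same parameters; a short remark tying $l$ to the periodically-adjusted deposit $d$ discussed above suffices.
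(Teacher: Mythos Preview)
Your approach matches the paper's exactly: define the per-share survival probability as $q=A_T^{l}$ (equivalently, loss probability $P=1-A_T^{l}$), model the number of surviving shares as $\mathrm{Binomial}(n,q)$, and verify numerically that $\Pr[\text{at least }t\text{ survive}]$ exceeds $0.999$ or $0.9999$ for $[t,n]=[4,10]$ with $l=4$ or $l=3$ respectively. Your additional worries about inter-onion correlations, the heavyweight fallback path, and simultaneous consistency with the Sybil-resistance lemma are more careful than the paper itself, which simply states the formula and plugs in the parameters; you can safely drop those caveats to align with the paper's level of rigor.
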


\begin{proof}
By denoting mailman availability as $A_{T}$, the service availability $A_{S}$ with parameters $[l,t,n]$ can be computed through the Cumulative Distribution Function of Binomial distribution:
\begin{equation}
A_{S}=1-\sum_{i=n-t+1}^{n} \binom{n}{i} P^i (1-P)^{n-i}
\end{equation}
% $$A_{S}=1-\sum_{i=n-t+1}^{n} \binom{n}{i} P^i (1-P)^{n-i}$$
where $P=1-A_{T}^l$ represents the probability that a $share$ is lost.
As illustrated in Fig.~\ref{hipc_0102}, when $[t,n]=[4,10]$ and $A_{T}=95\%$, we get $A_{S}=99.9\%$ (99.99\%) by taking $l=4$ (3).
 % and 99.99\% with $l=3$.
\end{proof}

We can see that there is a tradeoff between attack resilience and availability offered by \textit{SilentDelivery}.
A larger $l$ results in higher resilience against Sybil attacks while a smaller $l$ improves availability.
We suggest using a smaller $l$ (e.g., $[l,t,n]=[3,4,10]$) when senders are less concerned about attacks regarding premature disclosure because a smaller $l$ can also reduce the number of recruited mailmen and thus $\mathsf{\Xi} remuneration$ that senders need to pay.

\subsection{Cost analysis}
% \textcolor{red}{
We analyze the cost of  \textit{SilentDelivery} as follows:

\begin{lemma}
\label{lemma1}
The \textit{SilentDelivery} protocol has a total cost in the range of $[O(1),O(n)]$ and the $O(1)$ lower bound could be reached as long as the participants are rational.
\end{lemma}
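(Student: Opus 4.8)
The plan is to bound \textit{SilentDelivery}'s total on-chain gas by counting, epoch by epoch, the transactions the protocol can emit, and then to pin the rational execution to the lightweight branch. First I would walk through the phases of Fig.~\ref{protocol_detail_2} and Fig.~\ref{protocol_detail_3} and classify every on-chain call by its dependence on $n$, treating the sender-chosen parameters $l$ and $t$ as constants. The calls whose cost is independent of $n$ are the sender's deployment of $C_{sw}$ and $C_{agent}.newService()$ (all arguments of constant size, unlike the strawman's $O(n)$ list upload), any single $C_{sw}.deploySupplementary()$ (one deployment of the fixed $C_{sup}$ bytecode), $C_{sup}.reportPremature()$, $C_{agent}.recipientReceipt()$, and $C_{sup}.revealIdentity()$ --- the last carries an $O(n)$-length list of tuples but is still a single transaction. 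The only place where $\Omega(n)$ \emph{distinct} transactions can arise is the heavyweight branch: in \textit{epoch-3} each of the $n$ mailmen posts $C_{sup}.revealPrivkey()$, and in \textit{epoch-4} up to $n$ further $C_{sup}.reportAbsent()$/$C_{sup}.reportFake()$ calls may be made. Summing these, the worst-case total is $O(n)$ and cannot exceed it, while every run costs at least the sender's two set-up transactions; hence the total cost lies in $[O(1),O(n)]$.

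Next I would isolate the default execution path --- \textit{epoch-0} with no report filed, then \textit{epoch-1} in which the mailmen hand their $privkeys$ to $R$ over private Whisper channels (which cost nothing) and $R$ posts the single $C_{agent}.recipientReceipt()$, then \textit{epoch-6}. On this path the only on-chain transactions are the sender's $C_{sw}$ deployment, $C_{agent}.newService()$, and $R$'s receipt revelation: a constant number, each with constant-size payload, hence $O(1)$. I would then remark that the per-mailman withdrawals in \textit{epoch-6} are paid by the mailmen themselves, are optional (a mailman may keep serving and defer them), and amortise over the many services a mailman processes, so they do not inflate the $O(1)$ per-service cost borne by the sender --- the quantity that the strawman analysis in Section~3 identified as the scalability bottleneck.

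The remaining and hardest step is to show that a rational population actually stays on this path. I would argue by payoff comparison for each mailman, reusing the penalty structure established in the termination and attack-resilience subsections. If all recruited mailmen reveal their $privkeys$ off-chain in \textit{epoch-1}, $R$ reconstructs $key$, the service is recorded as successful, and each honest mailman recovers $\mathsf{\Xi} deposit$ plus $\mathsf{\Xi} remuneration$ at the price of one cheap withdrawal. A unilateral deviation --- remaining absent, or broadcasting a fake or premature $privkey$ --- is either harmless (when fewer than $t$ shares are withheld, $R$ still succeeds) or drives the protocol into \textit{epoch-2}--\textit{epoch-5}, where (i) the deviator is caught by $reportPremature()$/$reportAbsent()$/$reportFake()$ and forfeits $\mathsf{\Xi} deposit$, and (ii) the mode switch plus the $O(n)$ gas are charged against the mailmen's remuneration, a strict pay-cut even for the compliant ones. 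Thus no mailman strictly gains by deviating, and by Lemma~\ref{lemma1} an external briber would have to pay $tld$, which dominates anything a deviator could extract on its own; a rational coalition therefore behaves honestly and the protocol terminates in \textit{epoch-1}, i.e.\ at $O(1)$ cost. I would also dispatch $R$'s incentive: a rational $R$ wants $info$, already holds the off-chain ciphertexts, and gains nothing it could not also obtain by posting $recipientReceipt()$ in \textit{epoch-1}; a malicious $R$ that withholds it merely postpones payment to \textit{epoch-5} without changing the asymptotics.

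The main obstacle is this equilibrium argument. It must cover every deviation type simultaneously, allow coalitions of up to $t-1$ mailmen, and --- most delicately --- reconcile the \emph{harmless-deviation} case, in which an undetected lazy mailman might hope to free-ride, with the claim that rational play keeps the protocol lightweight. This is exactly where the reporting rewards and the requirement that a mailman prove its hidden recruitment relationship (via steps~6 and~7.1) before it can withdraw have to be invoked carefully, so that skipping the off-chain reveal is never weakly profitable once the small withdrawal cost and the non-zero detection risk are accounted for.
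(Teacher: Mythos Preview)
Your plan follows the same skeleton as the paper's proof --- identify the on-chain calls that can scale with $n$, observe that they are triggered only by misbehavior, and argue that rational mailmen avoid the associated deposit loss --- but you develop it far more exhaustively than the paper does. The paper's argument is two sentences: it names exactly two culprit functions, $revealIdentity()$ and $revealPrivkey()$, asserts that both are ``countermeasures against dishonest participants'' whose invocation costs the violator its $\mathsf{\Xi}deposit$, and concludes that rational participants therefore keep the protocol at the $O(1)$ lower bound. It does not attempt the full equilibrium analysis, the coalition argument, the recipient-incentive case, or the withdrawal-amortisation remark that you propose.

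Two details in your classification need correction. First, $C_{sup}.revealIdentity()$ is \emph{not} constant-cost merely because it is a single transaction: it uploads an $O(n)$-length list, and Ethereum gas is proportional to the data written, so the paper (Table~\ref{t2} and the surrounding text) charges it $\$0.21n$ and groups it with $revealPrivkey()$ as one of the two $O(n)$ contributors. Second, your dispatch of the malicious-$R$ case is off: if $R$ withholds $recipientReceipt()$ in \textit{epoch-1} despite holding $t$ shares, the protocol does not ``merely postpone to \textit{epoch-5} without changing the asymptotics'' --- it falls through to \textit{epoch-2} and the heavyweight branch, which \emph{is} $O(n)$. The paper sidesteps this by restricting attention to the mailmen (the only parties whose rationality it invokes), so if you want to keep the stronger statement you will need a separate incentive argument for $R$, or else align with the paper and quantify only over rational mailmen.
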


\begin{proof}
% \textcolor{red}{
The total cost would reach the upper bound $c^{\textit{up}}$ when 
(1) the identities of mailmen are uploaded via $\textit{revealIdentity()}$ onto the blockchain and 
(2) the private keys are uploaded via $\textit{revealPrivkey()}$ onto the blockchain, namely,
% $$O(c^{\textit{up}})$$
\begin{equation}
O(c^{\textit{up}}) \to O(c_{\textit{id}} \cdot n+c_{\textit{pk}} \cdot n+c_{\textit{rest}}) \to O(n)
\end{equation}
% $$O(c^{\textit{up}}) \to O(c_{\textit{id}} \cdot n+c_{\textit{pk}} \cdot n+c_{\textit{rest}}) \to O(n)$$
where $c_{\textit{id}}$ and $c_{\textit{pk}}$ denotes cost of uploading per identity and private key and $c_{\textit{rest}}$ represents the total cost of calling other functions.
In contrast, the total cost would reach the lower bound $c^{\textit{low}}$ when none of $\textit{revealIdentity()}$ and \textit{revealPrivkey()} have been invoked, namely
$O(c^{\textit{low}}) \to O(c_{\textit{rest}}) \to O(1)$.
% \noindent 
Both the functions are countermeasures against dishonest participants by fixing problems made by them and confiscating $deposit$ paid by them, so the violators would gain no positive benefit but only a negative payoff.
Considering that rational adversaries choose to violate protocols only when doing so brings them a positive payoff, rational participants would not choose to lose $deposit$, which in turn would push the total cost to reach its lower bound.
\end{proof}

\subsection{Discussion}
\subsubsection{Deliberate disclosure}
% \noindent 
% \textcolor{red}{
% \textbf{Deliberate disclosure}: 
It is possible that an active adversary who does not care about his or her loss but does care about the loss of the sender, will deliberately make premature disclosure, thus making the gas cost O(n) from O(1).
In this case, the security deposit of the adversary will be confiscated and the confiscated money will be used to compensate for the cost of mode switching.
With this strategy, the overall gas cost could not reach O(1), but it could be decreased by a certain amount of value.
Though it is a challenging problem to guarantee O(1) with such active adversaries, it is an interesting and important aspect of future research on this topic.

\subsubsection{Mailmen Collusion}
% \noindent 
% \textcolor{red}{
% \textbf{Collusion}: 
If the mailmen are in collusion, they may be able to decrypt their shares and find out the sent information.
The protocol allows any mailman to report a premature disclosure.
The reporter could receive a monetary reward from the security deposit paid by the suspected mailman as the incentive.
We design the incentive to incentivize mailmen to betray each other in collusion.
For instance, if two mailmen (say A and B) exchange private keys, immediately after receiving private key B, mailman A could report the disclosure of both key A and B to earn security deposit from mailmen B and also prohibit future reports from mailman B.
As a result, it will never be a good idea to disclose any private key in collusion.
However, if a mailman (say A) chooses to decrypt one layer of the onion and disclose only the decrypted onion without revealing the key, it will become difficult for other mailmen to verify the correctness of the decryption and hence, mailman A could not receive any monetary incentive to do this.

% performs misbehaviors inadvertently, such as forgetting providing the service or losing EOA's private key. 
% Such kinds of inadvertent misbehaviors lead to same results of intentionally performing \textit{absent mailman} misbehavior. 
% If we denote the percentage of EOAs performing inadvertent misbehaviors as $p_{IM}$, the success rate of a schedule with parameters $(l,m,n)$ will be computed through the Cumulative Distribution Function of Binomial distribution, namely $SR=1-\sum_{i=n-m+1}^{n} \binom{n}{i} P^i (1-P)^{n-i}$, where $P=1-(1-p_{IM})^l$ represents the probability that one $share$ is lost.
% In Figure~\ref{hipc_0102}, 
% we present the computed schedule success rate when 5\% of mailmen perform misbehaviors inadvertently.
% Specifically, in Figure~\ref{hipc_01}, by fixing $n$ to 5 and changing $m$ from 1 to 5, it shows that a smaller $m$, namely lower threshold for restoring $key$, performs higher resistance against inadvertent misbehaviors. By further changing $l$ from 3 to 5, we can find that a smaller $l$ offers better resistance against inadvertent misbehaviors. Then, in Figure~\ref{hipc_02}, $n$ is increased to 10. The increment of $n$ enhances the resistance against inadvertent misbehaviors when $m$ and $l$ do not change. Thus, larger $l$ and $n$ while smaller $m$ help maintaining high resistance against inadvertent misbehaviors.

\section{Implementation and evaluation}
In this section, we present the implementation and evaluation of the proposed \textit{SilentDelivery} protocol.
We programmed the protocol as smart contracts using \textit{Solidity}, the most commonly used smart contract programming language. 
% We also implement the protocol over the Ethereum official test network \textit{rinkeby}\footnote{\begin{scriptsize} https://www.rinkeby.io, we plan to make the contracts open source shortly. \end{scriptsize}}.
We also anonymously deployed the contracts to the Ethereum official test network \textit{rinkeby}\footnote{\begin{scriptsize} 
$addr(C_{agent})$: 0xa49d94bf3a7eeF256772b68Bf7D799Aa30F6F342 \\
\hspace*{4mm} $addr(C_{sw})$: \hspace*{2.5mm} 0x1640B660147fD684C37d8ccf1caAB732898c8627    \\
\hspace*{4mm} $addr(C_{sup})$: \hspace*{1.6mm}
0x4CBa2Ab68779d861D594A144Aa1a099a27E917e9
\end{scriptsize}
} 
for evaluation purpose.

% \textcolor{red}{
Similar to recent work on blockchain-based platforms~\cite{dziembowski2019perun,das2018yoda}, the key focus of our evaluation is on measuring gas consumption, namely the amount of transaction fees spent in the protocol. This is due to the fact that the execution complexity in Ethereum is measured via gas consumption.
It is worth noting that, in Ethereum, the amount of gas that a single transaction may spend is bounded by a system parameter and hence, the time overhead of executing functions inside smart contracts is small, usually in the scale of hundreds of milliseconds.
In addition, we compare the gas consumption of \textit{SilentDelivery} with that of \textit{Kimono}~\cite{Kimono}, a recent project that also employed Ethereum to release private information to future via recruiting mailmen.
% \textcolor{red}{
The main protocol in \textit{Kimono} consists of four stages, namely secret splitting, secret fragment distribution, secret fragment publishing and secret reconstruction.
During the first two stages, a sender encrypts private information with a $key$, splits the $key$ to $shares$ and distributes the $shares$ to a group of mailmen.
Then, during the last two stages, the mailmen reveal $shares$ at prescribed time point and restore the $key$. 
Similar to the strawman protocol, in \textit{Kimono}, a sender needs to upload the identities of all the recruited mailmen to the blockchain and later the recruited mailmen need to upload all the $shares$ to the blockchain.
Hence, as we discussed in section 3.3, the premature revelation of recruitment relationships assists collusion among mailmen. 
Also, both the uploaded identities of recruited mailmen and $shares$ make the overall cost $O(n)$.

% Since the primary goal of this paper is to compare the fees charged by our solution with that of existing commercial crowdsourcing intermediate companies (e.g., 99designs), we have designed to mimic the evaluation strategy adopted in these recent blockchain-based platforms[4,5]. 
% }

\begin{table}
\centering
\begin{tabular}{|c |c |p{2.6cm}|p{0.9cm}|p{0.8cm}|} \toprule 
    {\textbf{Phase}} & {\textbf{Step}} & {\textbf{Function}} & {\textbf{Gas}} & {\textbf{USD}} \\ \midrule
    \multirow{2}{*}{\textbf{TIDS.send}}
    & 1 & deploy $C_{sw}$ & 616666 & \$1.81 \\
    & 2 & newService & 83121 & \$0.24 \\ \midrule
    \multirow{2}{*}{\textbf{Epoch-0}}
    & 1.1 & deploySupplementary & 2425356 & \$7.10 \\
    & 1.2 & reportPremature & 65317 & \$0.19 \\ \midrule
    \textbf{Epoch-1} & 4.2 & recipientReceipt & 54291 & \$0.16 \\ \midrule
    \multirow{2}{*}{\textbf{Epoch-2}}
    & 5 & deploySupplementary & 2425356 & \$7.10 \\
    & 7.2 & revealIdentity & 72678 & \$0.21 \\ \midrule
    \textbf{Epoch-3} & 8 & revealPrivkey & 90689 & \$0.27 \\ \midrule
    \multirow{3}{*}{\textbf{Epoch-4}}
    & 9 & reportAbsent & 65343 & \$0.19 \\
    & 10 & reportFake & 1280723 & \$3.75 \\
    & 11 & informAgent & 57042 & \$0.17 \\ \midrule
    \textbf{Epoch-5} & 12 & recipientReceipt & 54291 & \$0.16 \\ 
    \bottomrule
\end{tabular}
%\vspace{-1mm}
\caption{
Key functions and their cost in Gas and USD. 
}
\vspace{-4mm}
\label{t2}
\end{table} 

\subsection{Gas consumption of \textit{SilentDelivery}}

% We first present the cost of key functions in the programmed smart contracts.
In TABLE~\ref{t2}, we list the key functions in the programmed smart contracts that interact with protocol participants during different phases of \textit{SilentDelivery} and the cost of these functions in both Gas and USD.
The cost in USD was computed through 
$cost(USD)=cost(Gas)*GasToEther*EtherToUSD$,
where $GasToEther$ and $EtherToUSD$ were taken as their mean value during the first half of the year 2019 recorded in \textit{Etherscan}
\footnote{\begin{scriptsize} https://etherscan.io/charts \end{scriptsize}}, 
which are $1.67*10^{-8}$ Ether/Gas and 175 USD/Ether, respectively.
As illustrated by the results, in the lightweight mode, the completion of a service only requires a sender to deploy $C_{sw}$ (\$1.81) and set up a new service (\$0.24) during \textit{TIDS.send} and a recipient to submit the receipt (\$0.16) during \textit{Epoch-1}, which costs only \$2.21 in total.
This cost is independent of the number of recruited mailmen, namely $O(1)$, so the tradeoff between security and scalability is eliminated and the scalability of \textit{TIDS} gets significantly improved.
In contrast, the heavyweight mode requires a sender to deploy $C_{sw}$ (\$1.81) and set up service (\$0.24) during \textit{TIDS.send} and a mailman to deploy $C_{sup}$ (\$7.10) and reveal identities of all recruited mailmen during \textit{Epoch-2} ($\$0.21 n$). It then requires all recruited mailmen to reveal $privkeys$ during \textit{Epoch-3} ($\$0.27 n$) and finally a recipient to submit the receipt (\$0.16) during \textit{Epoch-5}. It thus costs $\$(9.31+0.48n)$ for completing a service that recruits $n$ mailmen.
If any misbehavior occurs, the reporting functions can be invoked during \textit{Epoch-1} and \textit{Epoch-4} and the cost for calling these reporting functions will be deducted from $\mathsf{\Xi} deposit$ paid by protocol violators.

% \textcolor{red}{
The overall cost of using \textit{SilentDelivery} in its lightweight mode is \$2.21 and the fee is charged for employing the solid decentralized security in Ethereum. It is worth noting that the fee was computed using a half-year average price of Ether and sometimes the price of Ether was insanely high. In practice, users could perchase Ether with much lower exchange rate, and the lower price could help reduce the cost of using \textit{SilentDelivery}. 
We next analyze the relationship between the cost of using \textit{SilentDelivery} and the scale of recruited mailmen. In lemma 2, we proved that the minimum amount of security deposit that an adversary needs to deposit to obtain $t$ $shares$ on average via controlled mailmen would be $(l-1)vd$, which is independent of the number of recruited mailmen. In lemma 3, we get service availability 99.9\% (99.99\%) by taking the factore l with 4(3). 
Also, we have seen that the overall fee in its lightweight mode is not increased when the number of recruited mailmen get scaled up.
Therefore, an amount of 30 recruited mailmen (e.g., $[l,t,n]=[3,4,10]$) could give consideration to both security and cost.
In future work, we will further decouple the security and cost of textit{SilentDelivery} from the scale of the participants of the protocol.

\begin{figure}
\centering
{
   
    \includegraphics[width=0.75\columnwidth]{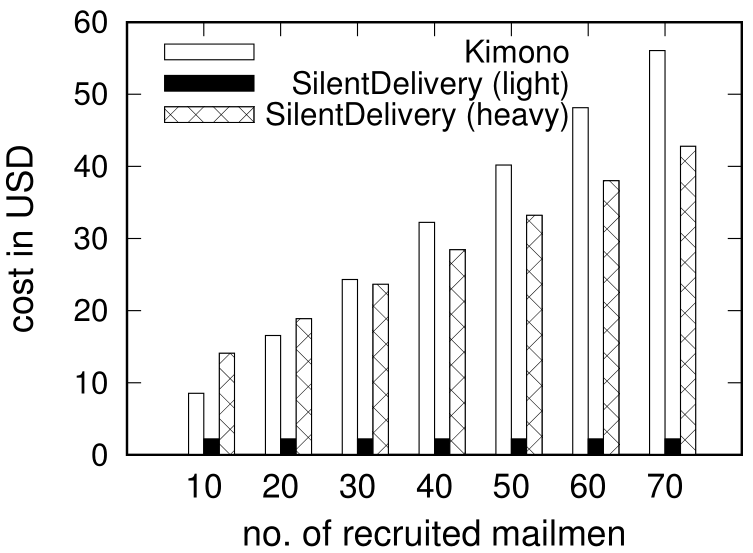}
}
\vspace{-2mm}
\caption {Cost of \textit{Kimono} and \textit{SilentDelivery}}
\vspace{-3mm}
\label{cost} 
\end{figure}

\subsection{Comparison between \textit{SilentDelivery} and \textit{Kimono}}

Next, we compare the gas consumption of \textit{SilentDelivery} with that of \textit{Kimono}~\cite{Kimono}.
For the purpose of evaluating the scalability of the protocols, we tested the gas consumption of recruiting different numbers of mailmen in the two projects and displayed the results in Fig.~\ref{cost}.
As can be seen, compared with \textit{Kimono}, the \textit{SilentDelivery} protocol proposed in this paper is much more scalable and cost-effective.
Due to the removal of non-scalable regulations, the lightweight mode offers $O(1)$ cost, which for the first time makes a decentralized approach for timed-delivery of private information practical and scalable.
We noticed that even the heavyweight mode is more scalable than \textit{Kimono}.
This is due to the use of \textit{silent recruitment}, which reduces the cost of the protocol during \textit{TIDS.send}.
In summary, compared with the state of the art, both the modes offered by \textit{SilentDelivery} are more scalable.
Particularly, the lightweight mode of \textit{SilentDelivery} can reduce the cost of recruiting more than 20 mailmen by over 85\%.

% Furthermore, we believe that the much lower $O(1)$ cost offered by the lightweight mode would incentivize mailmen to stay honest, which can reduce the cost of \textit{Kimono} by over 85\% when more than 20 mailmen are recruited.

\section{Related work}

\subsection{Sending private information to a future time point}
% The study of timed release of private information began with May~\cite{may1992timed} and has subsquently led to the development of four distinct types of solutions namely time-lock puzzles~\cite{rivest1996time,mahmoody2011time,bitansky2016time}, time servers~\cite{blake2004scalable,emura2011timed,kasamatsu2012time}, blockchain puzzles~\cite{liu2018build,liu2015time,liu2015extractable} and smart contract-based release~\cite{ning2018keeping,li2018decentralized,Kimono}.
% Our work in this paper tackles the key limitations of the state-of-the-art approaches using blockchain-based smart contracts~\cite{ning2018keeping,li2018decentralized,Kimono}. To the best of our knowledge,  \textit{SilentDelivery} is the \textit{first} decentralized solution for the timed data release problem with strong guarantees on security, scalability and cost-efficiency. 

The study of timed release of private information began with May~\cite{may1992timed}.
Since then, there have been extensive studies on this problem.
One representative approach~\cite{rivest1996time,mahmoody2011time,bitansky2016time} protects private information with a time-lock puzzle, forcing recipients to solve a cryptographic puzzle to obtain the information. 
Nevertheless, the time for solving such puzzles is non-deterministic and as a result, the delivery time of the information can not be precisely controlled. Also, cryptographic approaches for timed data release come with a very significant computational cost and as such, these techniques are not scalable.
Another well-studied approach~\cite{blake2004scalable,emura2011timed,kasamatsu2012time} relies on a (semi-)trusted time server to release time trapdoors to recipients at specified future time points.  These techniques involve a single point of trust and create a safety bottleneck.
The recent emergence of blockchain technology~\cite{nakamoto2008bitcoin} and smart contracts~\cite{wood2014ethereum} have started the development of new decentralized approaches.
One of the decentralized approaches~\cite{liu2018build,liu2015time,liu2015extractable} encloses private information with blockchain puzzles used in Proof-of-Work~\cite{nakamoto2008bitcoin} and therefore  minimizes the computational burden of the information senders as the blockchain puzzles are periodically solved by blockchain miners.
However in such an approach, the involved heavy cryptographic primitives result in very high performance overhead~\cite{liu2015time,ning2018keeping}. 
Another direction of recent decentralized techniques for timed data release~\cite{ning2018keeping,li2018decentralized,Kimono} leverages smart contracts~\cite{wood2014ethereum} to establish a decentralized virtual autonomous agent, through which an information sender could recruit a group of peers from the blockchain network as her mailmen to cooperatively maintain and deliver her private information to recipients.
% \textcolor{red}{ 
Here, the transparency of smart contracts makes it difficult to conceal any information recorded by the virtual agent and therefore challenges the service security in multiple aspects. Besides that, the cost of running smart contracts in this approach is proportional with the number of recruited mailmen, making it costly to recruit more mailmen to gain higher service availability. 
% }
Our work in this paper tackles the key limitations of the state-of-the-art approaches using blockchain-based smart contracts~\cite{ning2018keeping,li2018decentralized,Kimono}. To the best of our knowledge,  \textit{SilentDelivery} is the \textit{first} decentralized solution for the timed data release problem with strong guarantees on security, scalability and cost-efficiency.

\subsection{Scaling blockchain with off-chain execution}
Off-chain execution of smart contracts is a promising solution for improving blockchain scalability~\cite{cheng2019ekiden,das2019fastkitten,dziembowski2018general}.
However, recent works in this line have to either assume one honest manager for off-chain execution~\cite{cheng2019ekiden} or allow the execution to get aborted when the manager is dishonest~\cite{das2019fastkitten}.
The state channel network (SCN)~\cite{dziembowski2018general} could achieve the never abort property when at least one participant is honest but it only supports two-participant contracts.
The \texttt{NF-Crowd} protocols proposed in this paper extend the objective to support complex multi-participant multi-round smart contracts without losing the never abort property.
In addition, \textit{SilentDelivery} has been implemented in Ethereum, so it is ready-to-use.

% Also, inspired by the recently advanced techniques for improving blockchain scalability, including payment channel network (PCN)~\cite{dziembowski2017perun} and state channel network (SCN)~\cite{dziembowski2018general}

\subsection{Using cryptocurrency as security deposits}
There have been many recent efforts on blockchain-based protocol design that leverage cryptocurrency as security deposits to penalize unexpected behaviors and improve security. Andrychowicz \textit{et al.}~\cite{andrychowicz2014secure} used bitcoin to penalize anyone who unfairly aborts a secure multiparty computation (SMC).
Kiayias \textit{et al.}~\cite{kiayias2015traitor} proposed to use bitcoin as collateral to protect digital content.
In~\cite{dong2017betrayal}, the authors use Ether as security deposits to provide verifiable cloud computing.
Matsumoto \textit{et al.}~\cite{matsumoto2017ikp} used Ether as security deposits to enforce certificate authorities to be honest.
Inspired by these previous efforts, \textit{SilentDelivery} demands each mailman lock Ether in smart contracts as security deposits to penalize potential misbehaviors violating the protocol and thereby enforces recruited mailmen to stay honest.

% {\bf Except subsection A, all other subsections are discussing related work for the techniques used in the paper. They are not discussion related work for the problem solved in this paper. I think some text from the intro should be moved to related work.}

% \textcolor{purple}{\textbf{
% I have merged paragraph 2 of section 1 with subsection A of section 2. The red parts may need further editing. 
% Please let me know if you have other comments about this.
% }}

\section{Conclusion}
This paper proposes \textit{SilentDelivery}, a practical decentralized solution for cost-effectively implementing timed-delivery of private information. Our solution employs a novel combination of threshold secret sharing and decentralized smart contracts and tackles two key challenges that significantly limit the security and scalability of the protocol.
Through \textit{silent recruitment}, \textit{SilentDelivery} makes a mailman get recruited by a sender \textit{silently} without the knowledge of any third party while still making it possible for the recruitment relationship to be revealed to the smart contracts during a future time-frame.
Through \textit{dual-mode execution}, \textit{SilentDelivery} incentivizes mailmen to make the protocol get executed in the lightweight mode, reducing the service cost significantly.
We rigorously analyze the security of \textit{SilentDelivery} and implement the protocol over the Ethereum official test network. 
The results demonstrate that \textit{SilentDelivery} reduces the cost of running smart contracts by over 85\% and is more secure and scalable compared to the state of the art.

\section*{Acknowledgement}
Balaji Palanisamy is supported by a grant (Award \#2020071) from the US National Science Foundation (NSF) SaTC program. 
Chao Li acknowledges the partial support through the Fundamental Research Funds for the Central Universities (Grant No. 2019RC038).

\renewcommand\refname{Reference}
\bibliographystyle{IEEEtran}
\bibliographystyle{plain}
\urlstyle{same}

\bibliography{main.bib}

\vskip 0pt plus -1fil

\begin{IEEEbiography} [{\includegraphics[width=1in,height=1.25in]{./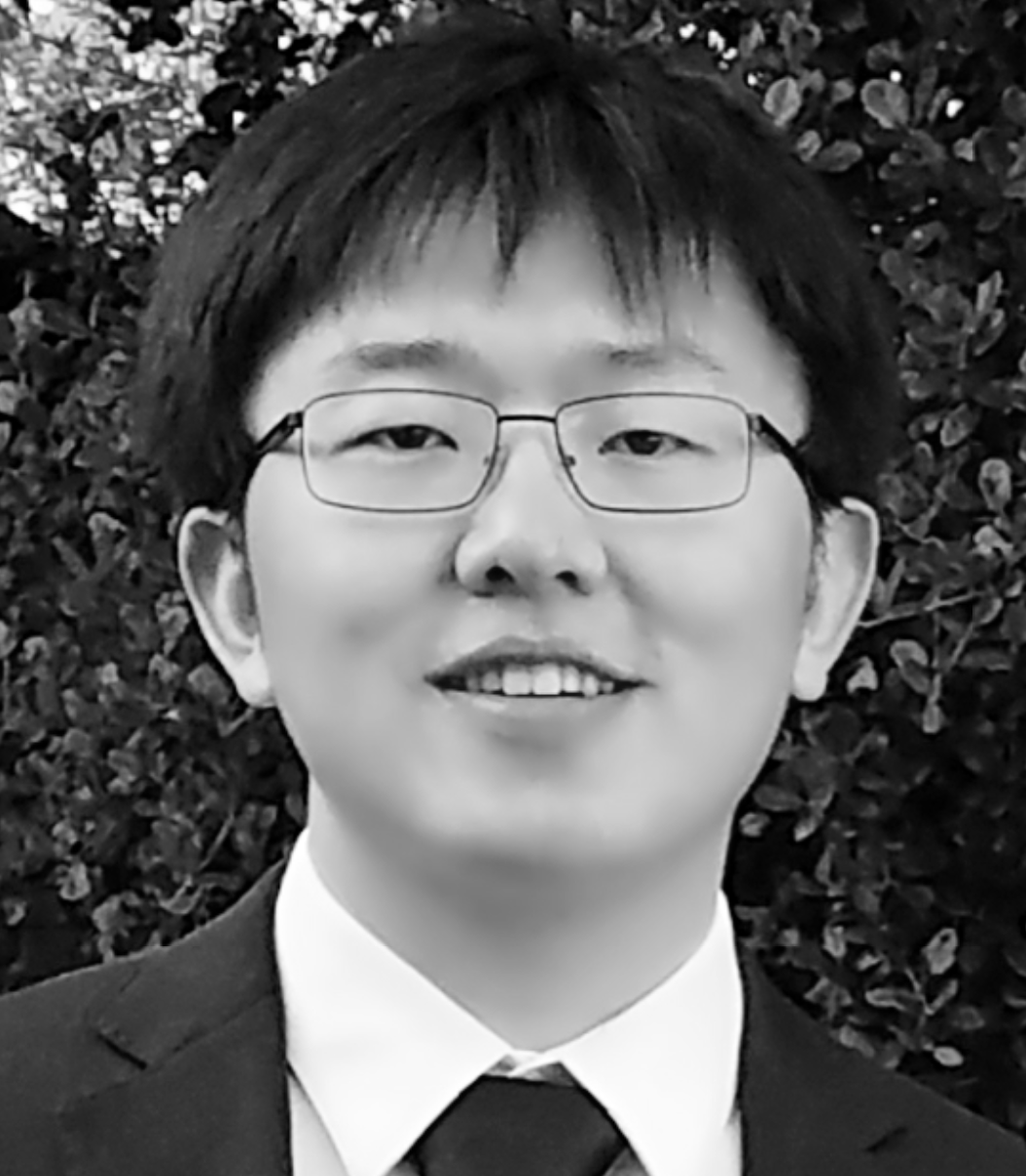}}]{Chao Li}
    is an Assistant Professor in the School of Computing and Information Technology at Beijing Jiaotong University. He received his Ph.D. degree from the School of Computing and Information at University of Pittsburgh and his MSc degree from Imperial College London. His current research interests are focused on Blockchain and Data Privacy. 
    % He is a member of the IEEE.
\end{IEEEbiography}

\vskip 0pt plus -1fil

\begin{IEEEbiography}
    [{\includegraphics[width=1in,height=1.25in]{./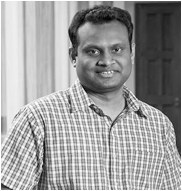}}]{Balaji Palanisamy}
is an Associate Professor in the School of computing and information in University of Pittsburgh. He received his M.S and Ph.D. degrees in Computer Science from the college of Computing at Georgia Tech in 2009 and 2013, respectively. His primary research interests lie in scalable and privacy-conscious resource management for large-scale Distributed and Mobile Systems. At University of Pittsburgh, he codirects research in the Laboratory of Research and Education on Security Assured Information Systems (LERSAIS).
 % He is a member of the IEEE. Dr. Palanisamy is currently serving as an Associate Editor for the IEEE Transactions on Services Computing journal.
\end{IEEEbiography}

\end{document}